\def\cert[#1](#2#3#4#5){C^{#1}_{#2\!#3,#4\!#5}}
\def\token[#1](#2#3){T^{#1}_{#2,#3}}
\newcommand{\myC}{${\mathcal C}$}
\newcommand{\myS}{${\mathcal S}$}
\newcommand{\myH}{${\mathcal H}$}
\newcommand{\myHp}{${\mathcal H}'$}
\newcommand{\myD}{${\mathcal D}$}
\newcommand{\myR}{${\mathcal R}$}
\newcommand{\myK}{${\mathcal K}$}
\newcommand{\myKp}{${\mathcal K}'$}
\newcommand{\myA}{${\mathcal A}$}
\newcommand{\myF}{${\mathcal F}$}
\newcommand{\myB}{${\mathcal B}$}
\newcommand{\myW}{${\mathcal W}$}
\newcommand{\mySc}{${\mathcal S}_C$}
\newcommand{\mycookie}{$\omega$}
\newcommand{\myno}{$\times$}
\newcommand{\tokS}{$\token[M]({\mathcal S} \mathcal S)$}
\newcommand{\iffdef}{\overset{\mathrm{def}}{\Leftrightarrow}}
\theoremstyle{definition}
\newtheorem{definition}{Definition}[section]
\newtheorem{theorem}{Theorem}[section]
\title{Analysis of the Security Design, Engineering, and\\ 
Implementation of the SecureDNA System} 
\author{\IEEEauthorblockN{Alan T. Sherman, Jeremy J. Romanik Romano,\\ 
Edward Zieglar, Enis Golaszewski, Jonathan D. Fuchs}
\IEEEauthorblockA{Cyber Defense Lab\\
University of Maryland, Baltimore County (UMBC)\\
Baltimore, Maryland 21250\\
Email:  \{sherman,jeremyr2,eziegl1,golaszewski,jfuchs2\}@umbc.edu }
\and
\IEEEauthorblockN{William E. Byrd}
\IEEEauthorblockA{Hugh Kaul Precision Medicine Institute\\
Heersink School of Medicine\\
University of Alabama at Birmingham\\
Birmingham, Alabama 35294\\  
Email:webyrd@uab.edu}
}
\date{December 9, 2025}  
\begin{document}

\IEEEoverridecommandlockouts
\makeatletter\def\@IEEEpubidpullup{6.5\baselineskip}\makeatother
\IEEEpubid{\parbox{\columnwidth}{
		Network and Distributed System Security (NDSS) Symposium 2026\\
		23-27 February 2026, San Diego, CA, USA\\
		ISBN XXX-X-XXXXXXX-X-X\\
		https://dx.doi.org/10.14722/ndss.2026.[23$|$24]xxxx\\
		www.ndss-symposium.org
}
\hspace{\columnsep}\makebox[\columnwidth]{}}

\maketitle

\begin{abstract}
We analyze security aspects of the SecureDNA system 
regarding its system design, engineering, and implementation.
This system enables DNA synthesizers to screen order requests against a database of hazards. 
By applying novel cryptography involving distributed oblivious pseudorandom functions, 
the system aims to keep order requests and the database of hazards secret.
Discerning the detailed operation of the system in part from source code (Version 1.0.8),
our analysis examines key management, certificate infrastructure, authentication, and rate-limiting mechanisms.
We also perform the first formal-methods analysis of the 
mutual authentication, basic request, and exemption-handling protocols.

Without breaking the cryptography, our main finding is that
SecureDNA's custom mutual authentication protocol SCEP achieves only one-way authentication: 
the hazards database and keyservers never learn with whom they communicate.
This structural weakness violates the principle of defense in depth and 
enables an adversary to circumvent rate limits that
protect the secrecy of the hazards database,
if the synthesizer connects with
a malicious or corrupted keyserver or hashed database.
We point out an additional structural weakness
that also violates the principle of defense in depth:
inadequate cryptographic bindings prevent the system from detecting if responses, 
within a TLS channel, from the hazards database were modified.
Consequently, if a synthesizer were to reconnect with the database
over the same TLS session, an adversary could replay and swap
responses from the database without breaking TLS.
Although the SecureDNA implementation does not allow such reconnections, it would be stronger
security engineering to avoid the underlying structural weakness.
We identify these vulnerabilities and
suggest and verify mitigations, including adding strong bindings.
Software Version 1.1.0 fixes SCEP with our proposed SCEP+ protocol.

Our work illustrates that a secure system needs more than sound mathematical cryptography; 
it also requires formal specifications, sound key management, 
proper binding of protocol message components, and
careful attention to engineering and implementation details.
\end{abstract}

\section{Introduction} 
\label{sec:intro}

The combination of gene editing technology (e.g., CRISPR~\cite{barrangou2016applications}),
DNA synthesis, and AI~\cite{AI2027}
poses an existential threat to humanity.
Advances now enable a single malicious actor, with modest resources, to synthesize pathogens capable of starting pandemics.
As an initial response to this threat, an international
group of eminent biologists and cryptographers 
designed and implemented the SecureDNA system~\cite{SecureDNA2024}, 
which provides a method for honest DNA synthesis labs 
to screen synthesis-order requests for known biohazards. 
In late 2022, the SecureDNA Foundation~\cite{SecureDNAweb} launched the system.

SecureDNA is significant for its advanced technology and impactful policy. It is the first
fully privacy-preserving, cryptographically verifiable screening system.
It operates in near real time through a distributed network of independent authorities
and supports standardized verifiable compliance. 
SecureDNA's long-term strategy is to push for regulation mandating universal screening of DNA synthesis orders.
Unfortunately, currently a malicious entity could avoid SecureDNA controls by
purchasing their own synthesis machine for less than \$50,000, or by
sending their request to a lab that does not screen.

Using a \textit{distributed oblivious pseudorandom function (DOPRF)}, the system
enables synthesizers to screen order requests against a curated database of hazards 
while keeping the database and requests secret~\cite{baum2024CIC,baum2020cryptographic}. 
To eliminate a possible single critical point of failure, the system distributes the key used by the DOPRF among $n$ keyservers using Shamir secret sharing~\cite{shamir1979share,stinson92}.

We analyze security aspects of the SecureDNA system, 
focusing on its system design, engineering, and implementation.
As part of this study, we perform the first formal-methods symbolic analysis 
of the two main protocols in SecureDNA for \textit{structural} (fundamental logical) weaknesses.  
These protocols are the \textit{basic order-request protocol} and the
\textit{exemption-handling protocol}, which permits a qualifying customer to order certain dangerous sequences.
Our work studies the security of the SecureDNA system design and implementation. 
We do not review the bio-design, high-level crypto-design, or low-level software security.
We do not examine the non-public monitoring and auditing that play
an important role in operational system security.

Because the available written descriptions do not adequately describe the protocols, 
we first discern these protocols in part by examining the source code (Version 1.0.8)~\cite{sec_dna_github}.
We also examine PCAP data from our working local copy of the system.
We then model these protocols and precisely state security goals.  
Using the \textit{Cryptographic Protocol Shapes Analyzer (CPSA)}~\cite{cpsamanual},
we analyze whether the models achieve the goals.
Our analysis includes a careful examination of 
SecureDNA's key and certificate management systems
and their use in authentication and rate-limiting
of queries to the hazards database.

Despite rigorous \textit{universal composability (UC) proofs}~\cite{canetti2020universally}
of the abstract cryptography~\cite{baum2024CIC}, 
the concrete protocols in use in the SecureDNA system
lack detailed descriptions and formal models, and 
the written descriptions lack verifiable, precisely-stated security goals of the system.
Thus, our starting point is largely the SecureDNA system's
whitepaper~\cite{SecureDNA2024} and publicly-released source code~\cite{sec_dna_github}.

Experience shows that there is great value in formal-methods analysis
of cryptographic protocols, including in the design phases, because
humans are poorly suited to analyze their complex nuanced
security properties and their many possible execution sequences.
For example, in 1995, using a protocol analysis tool, Lowe~\cite{Lowe1995} identified
a protocol-interaction attack on the 1978 \textit{Needham-Schroeder (NS)} public-key authentication protocol~\cite{Needham1978}, based on
a subtle structural flaw which had gone unnoticed for 17 years.
We point out a similar attack on SecureDNA's basic request protocol
that circumvents rate limiting.
We first discovered this flaw through our formal-methods analysis.

Aside from its monitoring, SecureDNA depends crucially on four 
mechanisms to achieve its security goals:
a \textit{custom mutual-authentication protocol (SCEP)}, 
\textit{exemption-list tokens (ELTs)} to implement its exemption-list feature,
an associated \textit{certificate management system}, and
{\it Transport Layer Security (TLS)}~\cite{rescorla2001ssl,rfc5246,rfc8446}
to protect communication channels. SecureDNA uses TLS 1.3 and 1.2, but only with ECDH and not with RSA.
If TLS fails, then a \textit{Man-in-the-middle (MitM)} attack would be possible and the only security goal that SecureDNA would achieve would be secrecy of orders.


Our analysis uncovers undesirable vulnerabilities concerning how SecureDNA uses these mechanisms, violating the principle of defense in depth.
First, due to a structural weakness, SecureDNA's custom authentication protocol provides only one-way authentication: the hazards database and keyservers never learn with whom they communicate.  
Consequently, an adversary, who tricks an honest synthesizer {\myS}
into connecting with a malicious or corrupted database or keyserver,
could masquerade as {\myS} to circumvent rate limiting. 
Our proof-of-concept implementation of this MitM attack demonstrates
its feasibility.

Second, structural protocol weaknesses involving inadequate cryptographic bindings
of ELTs, authentication cookies, and responses from the hazards database, 
prevent the system from detecting
if responses, within a TLS channel, 
from the hazards database were modified.
These inadequate bindings create a ``latent vulnerability'':
if a synthesizer were to reestablish a connection with the database
over the same TLS session, an adversary could replay and swap responses from the database without breaking TLS.
Although the current SecureDNA implementation does not allow such reconnections, 
the vulnerability might manifest under future implementations or TLS configurations.


Recently we made a responsible disclosure of our findings to the 
SecureDNA team, and we had fruitful discussions with them.
We learned that part of their (undocumented) security strategy 
is based significantly on monitoring, detecting, and responding
to certain types of malicious behaviors rather than on preventing such behaviors. 
They stated that, with the aid of source code that we have not seen,
their strategy includes automatic and manual intervention based on automatic alerts,
which aim to detect unexpected deviations from typical behaviors, whether
due to errors or malice. Their whitepaper does not mention this strategy.


Our contributions include:
(1)~A validated discernment of the SCEP, basic request, and exemption-handling protocols of the SecureDNA system
based in part on an examination of the source code.
(2)~CPSA models of these protocols.
(3)~Formal statements of security goals of these protocols, and 
formal-methods symbolic analysis of the models for structural weaknesses with respect to these goals.
(4)~Identification of structural flaws in SecureDNA's SCEP protocol,
showing that the protocol achieves only one-way authentication.
We also exhibit rate-limiting and \textit{denial-of-service (DoS)} 
attacks based on this vulnerability.
(5)~Identification of structural weaknesses in the SecureDNA protocols involving
inadequate authentication and cryptographic bindings, including
of ELTs, authentication cookies, and responses from the hazards database.
We also explain how an adversary could exploit this vulnerability 
to replay and swap responses from the database, without breaking TLS,
if a synthesizer were to reconnect with the database using the same TLS session.
(6)~Recommendations for strengthening the SecureDNA system, and formal-methods validation
of our suggested improvements to SCEP and 
to authentication and the bindings of protocol message components.

Section~\ref{sec:SCEP-analysis} presents our formal-methods analysis of SCEP, and
Section~\ref{sec:improvements} analyzes our improved SCEP+. 
Appendix~\ref{append:basic} analyzes the basic-query protocol; 
Appendix~\ref{append:exemption} analyzes the exemption-query protocol; and
Appendix~\ref{append:results} summarizes the results of our analyses in two tables.
All of our artifacts are available on GitHub~\cite{anonGitHub},
including our proof-of-concept implementation of our MitM attack
and complete CPSA input models and output shapes.


\section{Previous Work}
\label{sec:previous}

Aside from the abstract cryptographic studies by 
Baum et al.~\cite{baum2024CIC,baum2020cryptographic},
and a 2022 course project by Langenkamp 
et al.~\cite{MITArticle}, to our knowledge, our work
is the only security analysis of the SecureDNA system.
Whereas Baum et al.\ describe and analyze a cryptographic protocol for computing DOPRFs 
when some keyservers are malicious, we analyze the security of the
SecureDNA system design, engineering, and implementation, with a focus on
its query protocols. 
Baum et al.\ do not consider rate-limiting or DoS attacks.

Kane and Parker~\cite{kane2024screening} and
Hoffman et al.~\cite{hoffmann2023safety} review the landscape in DNA screening.
The U.S.\ Department of Health and Human Services~\cite{NIHguidelines}
offers non-binding security considerations, but
there are no standards or laws that require screening for dangers.
Founded in 2009, the International Gene Synthesis Consortium 
is a coalition of synthetic DNA providers and stakeholders
that use a common screening protocol~\cite{IGSCprotocol}.
Johns Hopkins~\cite{JHUhub} maintains a hub of useful information, including
a list of screening companies and tools.
We are not aware of any previous work that analyzes the design, security engineering, and implementation of any such system. Previous work deals mainly with biosafety, not cybersecurity.


\section{Background}  
\label{sec:background}

\thispagestyle{plain}
\pagestyle{plain}
We present brief relevant background regarding 
protocol analysis, strand spaces, and CPSA. For more details
about these topics, see~\cite{golaszewski2024limitations}.

\subsection{Protocol Analysis} 
\label{ssec:protocolanalysis}

\textit{Formal-methods analysis} of a protocol involves expressing the protocol in a formal mathematical model, stating propositions that reflect the protocol's desired security properties, and 
proving or disproving those propositions. Often, this process requires the assistance of specialized theorem-proving tools, such as
ProVerif~\cite{blanchet2016modeling}, Tamarin Prover~\cite{meier2013tamarin}, Maude-NPA~\cite{escobar2009maude}, or CPSA~\cite{liskov2016cryptographic}.
\textit{Symbolic}, versus computational, formal-methods analysis 
looks for \textit{structural} (fundamental logical) weaknesses, not cryptographic weaknesses.
Such formal-methods analysis of protocols will not detect 
implementation errors nor the application of protocols to inappropriate settings.

\subsection{Strand Spaces} 
\label{ssec:strands}

Strand spaces~\cite{fabrega1999strand} are a useful symbolic formalism for modeling the 
authentication and secrecy properties of cryptographic protocols. 
In the strand-space formalism, a cryptographic protocol is a set of \textit{roles} that form a template for legitimate {strands}. A \textit{strand} is a sequence of sent and received messages, where each message is an element of a term algebra that contains operations such as encryption and message concatenation. The \textit{strand space} for a cryptographic protocol is the set of all strands formed by term substitutions on the roles of the protocol or adversary strands. 

In the strand space formalism, executions of a protocol are modeled as \textit{bundles}.
A bundle for a protocol $\mathcal{P}$ is a set of strands (or prefixes of strands) from the strand space of $\mathcal{P}$ such that every reception node in the bundle corresponds to a unique transmission node in the bundle that sends the same message that was received.
The directed graph with edges connecting consecutive nodes on the same strand, and connecting corresponding reception and transmission nodes, must be acyclic so that the events in the bundle respect causality.
Bundles are of central importance in formal-methods analysis using strand spaces because each bundle provides an interpretation of a security goal formula. A protocol achieves a security goal if and only if the security goal is true under the interpretation of all of the protocol's bundles.


\subsection{CPSA} 
\label{ssec:cpsa}

CPSA~\cite{liskov2016cryptographic} is an open-source tool for analyzing cryptographic protocols within the strand-space model.
CPSA distinguishes itself as a \textit{model-finder}.
Its input is a model, which comprises strands consisting of roles, messages, variables, and a set of initial assumptions. When executing to completion, CPSA provably identifies all essentially different executions of the protocol within a 
\textit{Dolev-Yao (DY)} network intruder model~\cite{dolevyao} 
and outputs them as \textit{``shapes''}~\cite{liskov2011completeness}.
CPSA's model finding enables users to identify the strongest achieved security goal for an input model~\cite{rowe2016measuring}.
Users define CPSA models using LISP-like s-expressions that implement a custom language.
In these models, which superficially resemble (but are not) executable source code, users specify one or more roles, associated variables and messages, and assumptions.


\begin{figure*}[t] 
    \centering
    \includegraphics[width=\textwidth]{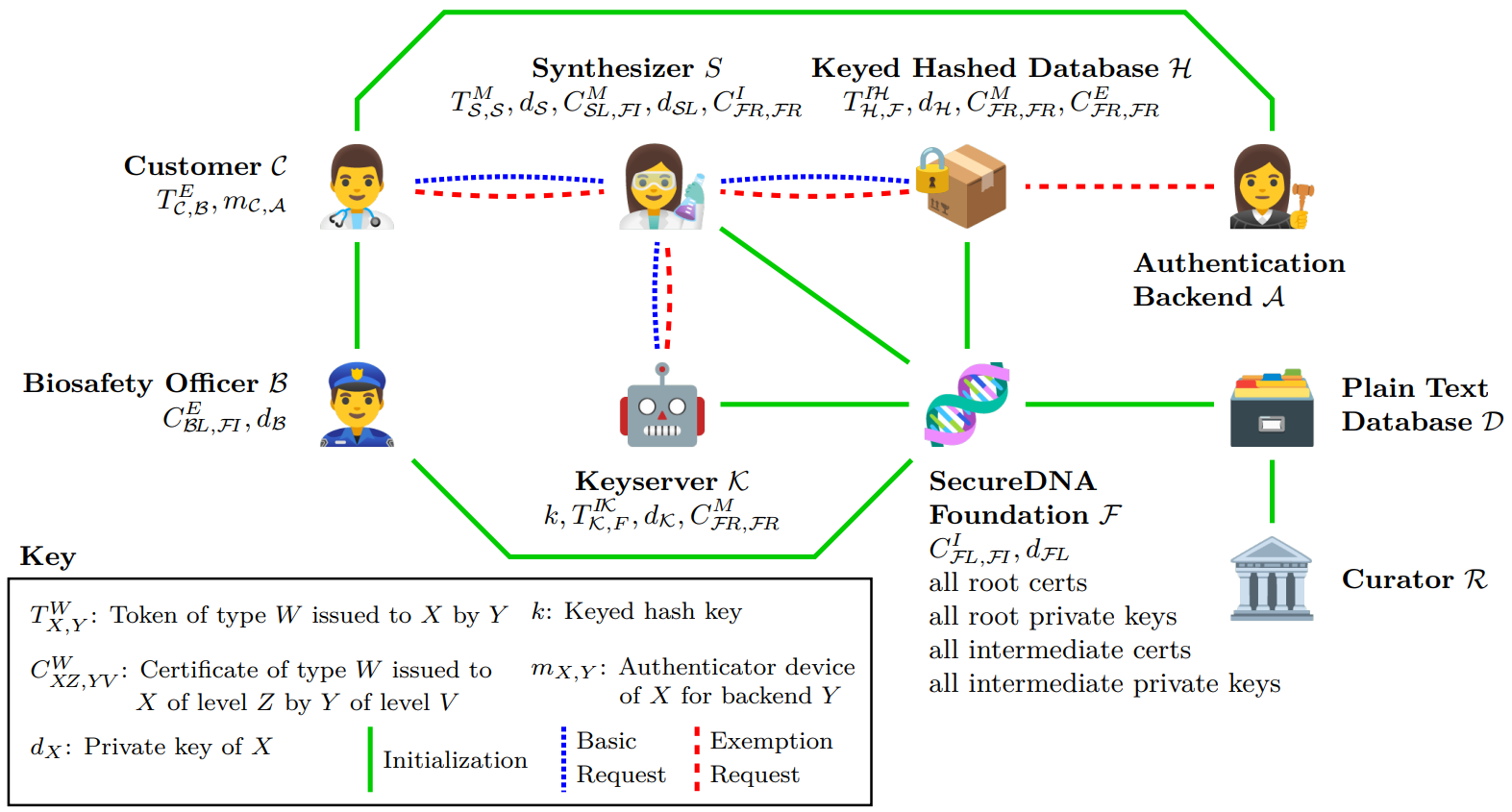}
    \caption{Architecture of the SecureDNA system showing the roles and cryptographic variables held by each role.}
    \label{fig:arch}
\end{figure*} 

\section{SecureDNA System}  
\label{sec:SDS}

The SecureDNA system~\cite{SecureDNA2024} provides a way for honest synthesizers to screen 
DNA synthesis-order requests for biohazards.  We summarize how this system works by explaining its
architecture, security goals, oblivious search for hazards,
certificate infrastructures, authentication tokens,
exemption-list tokens, and source code.

\subsection{Architecture} 
\label{ssec:arch}

As envisioned by its inventors, the SecureDNA system involves eight logical entities:
a \textit{Customer} {\myC} sends an order request, and
a \textit{Synthesizer} {\myS} [E10]\footnote{The notation ``[E10]'' means see Endnote~10 in Appendix~\ref{append:endnotes}.} receives and processes the order request.
There is a \textit{Plain Text Database} {\myD} of hazards, and
a \textit{Keyed Hashed Database} {\myH} [E7] of hazards.
A \textit{Curator} {\myR} populates {\myD}.
A \textit{Biosafety Officer} {\myB} grants exemption tokens to {\myC}.
Using Shamir secret sharing of a hash key $k$, 
a \textit{Distributed Keyserver} {\myK} [E4]
applies keys needed to populate and query {\myH}.
A \textit{SecureDNA Administrator} {\myF}---the SecureDNA Foundation---is the singular
root of trust that generates the hash key, 
establishes the root certificate for each of the three certificate hierarchies,
processes requests for synthesizer and exemption certificates, 
and releases system software and updates.
In addition, it is prudent to include
a ninth entity, an \textit{Authentication Backend} {\myA},
that verifies authentication requests for exemption tokens. See Figure~\ref{fig:arch}.

\subsection{Security Goals} 
\label{ssec:goals}

Based on the SecureDNA whitepaper~\cite{SecureDNA2024}, we understand that
the SecureDNA system aims to accomplish four security goals,
which we state informally:

\begin{itemize}
\item[SG1]
Keep the sequences in {\myD} secret to everyone except the Curator {\myR}.

\item[SG2]
Keep each synthesis order request $s$ secret to everyone except {\myC} and {\myS}.

\item[SG3]
Return to {\myS} a valid answer to whether a synthesis order request $s$ is in {\myD}.

\item[SG4]
When {\myC} presents an ELT to {\myS}, enable {\myS} to determine whether {\myC} 
is authorized to receive the synthesis of order request $s$, 
even if some sequences in $s$ are in {\myD}.
\end{itemize}

Neither the SecureDNA whitepaper~\cite{SecureDNA2024} nor
the initial technical cryptographic manuscript~\cite{baum2020cryptographic} 
states any of these goals precisely. The more recent cryptographic article~\cite{baum2024CIC}
deals only with the abstract mathematical cryptography and does not address SG3 or SG4.
Instead, this article focuses on ensuring correct operations of {\myK} when some of the
distributed keyservers are malicious.

To analyze these goals, it is essential to identify the adversarial model
(see Section~\ref{sec:adversary}). The whitepaper~\cite{SecureDNA2024} 
vaguely mentions the
\textit{semi-honest} and \textit{malicious} models without clearly articulating what
model is applied in what context.  The semi-honest model is a very weak model (participants must follow the protocol) and usually is insufficient for meaningful protection in many
network environments. Following well-accepted practice, we adopt the DY model.

Concerning SG1,  documentation states that
``secrecy [of the database of hazards] should be protected at the 
highest possible level while preserving usability,''~\cite[p.\ i]{baum2020cryptographic} and
``leakage of {\myH} about {\myD}  must be kept to a minimum.''~\cite[p.\ 17]{SecureDNA2024}.


However, when we presented our rate-limiting attack to the SecureDNA team, they 
asserted that SG1 is not very important, in part, because most known hazards are already described in publicly available documents. 
They explained that one important aspect of SG1 concerns
the relatively few newly discovered hazards
that are not yet widely known.
We consider SG1 an important goal;
much simpler designs would be possible if SG1 were eliminated.

Another important aspect of SG1 involves SecureDNA's
subtle biological strategy for
detecting functional mutants of hazards,
which are too numerous to list individually in {\myD}.
Every ``\textit{re-spinning}'' of {\myH} involves 
selecting new representative mutants to include in {\myD}, 
then \textit{rehashing} {\myH}.
To evade detection of mutants, it would be helpful to know
which mutants are in {\myD}.
An adversary might attempt a \textit{database-scraping attack} to
learn parts of {\myD} and to learn which mutants are in {\myD}.
Re-spinning complicates learning what mutants are in {\myD} by changing them, but
re-spinning does not prevent attacks that aim to learn the non-mutant entries
of {\myD}. Rehashing only makes such attacks less efficient. 
Although monitoring and re-spinning provide significant defenses, better protocols
would provide stronger protections.

We sensed that SecureDNA is extremely concerned about adoption, 
and to that end, their most important security goal is SG2. 
They are also very concerned about non-security goals, including 
speed and quality of service.


\subsection{Screening for Hazards} 
\label{ssec:screening}

The SecureDNA system screens for biohazards by performing a type of oblivious search
that uses a keyed hashed database {\myH} of hazards.
Because of its dangerous information, 
and to complicate evading detection of hazardous variants, 
the corresponding plain text database {\myD} must be kept secret.  
Sequences in {\myD} are relatively short (as small as 60~bits) 
to prevent the synthesis of longer hazards from shorter sequences.

The system depends in part on rate-limiting queries to {\myH} to prevent 
an adversary from searching a large number of the possible short sequences,
even if an exhaustive search of such sequences were computationally infeasible.
The system also periodically re-spins {\myH} to thwart the adversary from
learning which mutants are in {\myH} (see Section~\ref{ssec:goals}).

\subsection{Certificate Infrastructures} 
\label{ssec:PKI} 

The SecureDNA system manages a custom certificate-based 
\textit{public-key infrastructure (PKI)}
to support authentication and exemptions. There are three separate hierarchies: 
\textit{manufacturer} (to authenticate {\myS}), 
\textit{infrastructure} (to authenticate {\myK} and {\myH}), and 
\textit{exemption} (to prove {\myC} is authorized to synthesize some exempted sequences).
Each hierarchy has a separate root of trust created by the Administrator {\myF}:
\textit{manufacturer root} $\cert[M]({\mathcal F} R{\mathcal F}R)$, 
\textit{infrastructure root} $\cert[I]({\mathcal F}R{\mathcal F}R)$, and
\textit{exemption root} $\cert[E]({\mathcal F}R{\mathcal F}R)$.
SecureDNA distributes these roots of trust in its software release.

Each node in the hierarchy is a \textit{certificate}, which cryptographically binds an identity with its public key. 
Each non-root certificate is digitally signed by the private key
of its parent in the hierarchy. Each certificate has the following format:
\begin{equation}
    C_{x,y}^{W} = ( N_x, context, \{ M( N_x, context ) \}_{d_{y}} ),
\end{equation}

\noindent
where 
\begin{equation}
context = h,\sigma,p_x,N_y,p_y,{\delta}.
\end{equation}

\noindent
$x$ is the subject (receiver); $y$ is the issuer;
$M$ is a cryptographic hash function; and
$d_y$ is the private key of $y$.
For any data $v$ and any key $z$,
$\{v\}_z$ denotes encryption or signature of $v$ under key $z$.
Here, 
$h$ is a description of the certificate, which contains a version, 
type $W$ ({manufacturer}, {infrastructure}, {exemption}),
and hierarchy level ({root}, {intermediate,} {leaf}).
$\sigma$ is a randomly generated identifier assigned to the certificate, 
which identifies $x$.
$N_x$ and $N_y$ are the identifiers of $x$ and $y$, respectively, 
containing a name and email.
$p_x$ is the public key of $x$; 
$p_y$ is the public key of $y$; and
$\delta$ is a validity period, comprising a start and end date [E1,E2,E3]. 
In our notation $\cert[W](XZ,YV)$, $W$ is the type from $h$; 
$X$ is the subject; $Z$ is $X$'s hierarchy level; $Y$ is the issuer; 
and $V$ is the issuer's hierarchy level.


There are special types of certificates, called \textit{tokens}, 
including \textit{authentication tokens} and \textit{exemption-list tokens}.
The format of each token is similar to that of a certificate:

\begin{equation}
   T_{x,y}^{\pi} = ( \pi,u_\pi,context, \{ M( \pi,u_\pi, context ) \}_{d_{y}} ),
\end{equation}

\noindent where
$\pi$ is a certificate type, and
the contents of field $u_\pi$ depend on the certificate type $\pi$.

Each token is a leaf of a certificate chain. 
The usage of ``\textit{leaf}'' in the hierarchy level of certificates is a misnomer, as leaf certificates are intermediate certificates used to issue tokens. 
We will use SecureDNA's notation throughout the paper.
Each token is bundled with a chain of certificates that can be verified
to the corresponding root.
SecureDNA manages the creation and distribution of root certificates, 
creation of intermediate certificates, 
usage of intermediate certificates to sign leaf certificates, and 
revocation of certificates. 
Our notation for tokens is similar to that for certificates, 
but with no hierarchy level for subjects or issuers, 
because tokens have no hierarchy, and all issuers are leaf certificates.  

{\myF} provides {\myK} and {\myH} with a list of 
token identifiers $\sigma$ and public keys $p_x$ for revoked tokens.
The SecureDNA team explained that they did not incorporate standard revocation protocols for X.509 (such as OSCP~\cite{malpani2013x}) because of the additional burden on SecureDNA. Similarly, they did not empower {\myS} to act on revocation lists for {\myK} and {\myH}.
Instead, if {\myK} or {\myH} is compromised, they plan to re-key {\myH}.  
Whenever the root key is changed, SecureDNA would also have to issue a new client, because
the root certificates are embedded into the client software.

\begin{figure*}[t] 
    \centering
    \includegraphics[width=\textwidth]{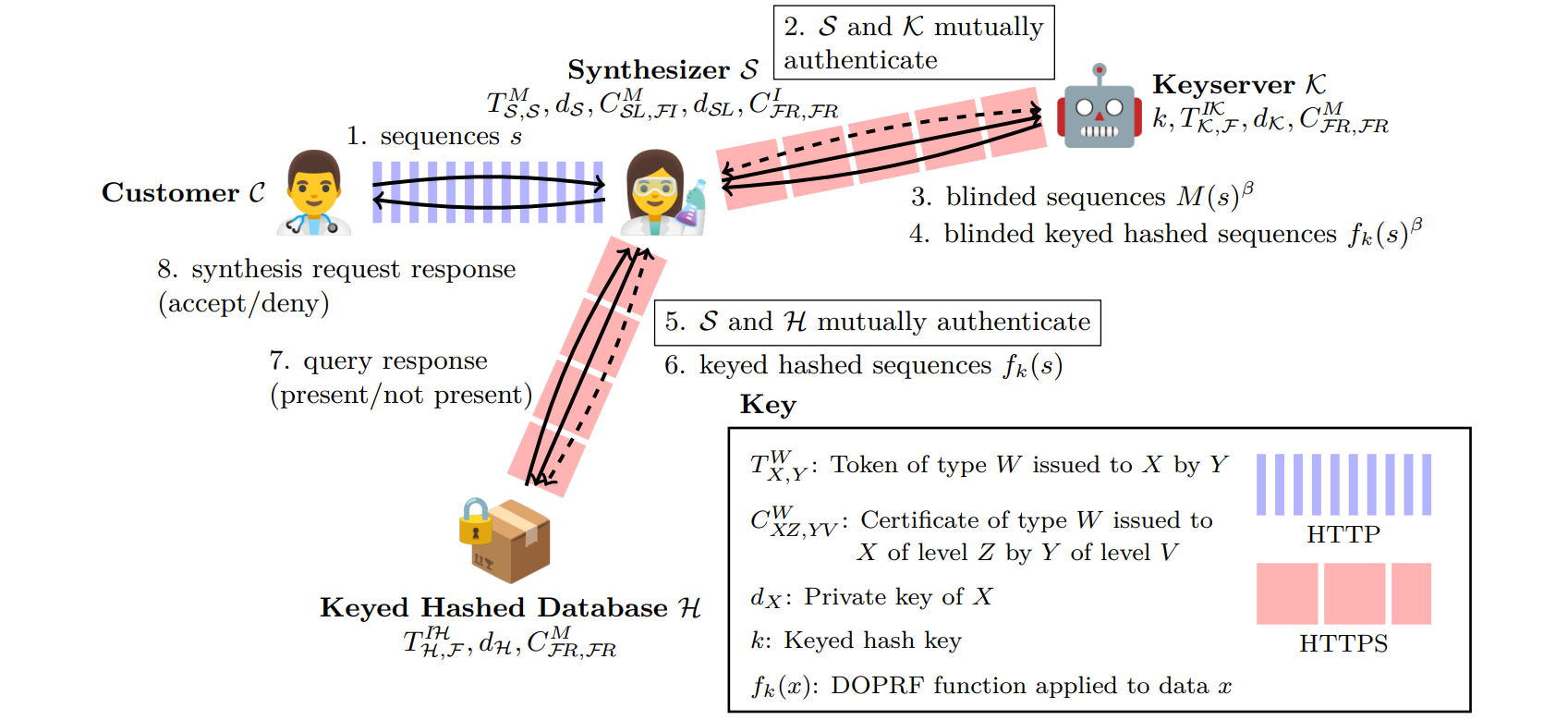}
    \caption{The basic order-request protocol.
    Steps~2 and 5 call a custom mutual authentication protocol SCEP 
    (see Section~\ref{ssec:custom}).}
    \label{fig:basic_proto}
\end{figure*} 


\begin{figure*}[t] 
    \centering
    \includegraphics[width=\textwidth]{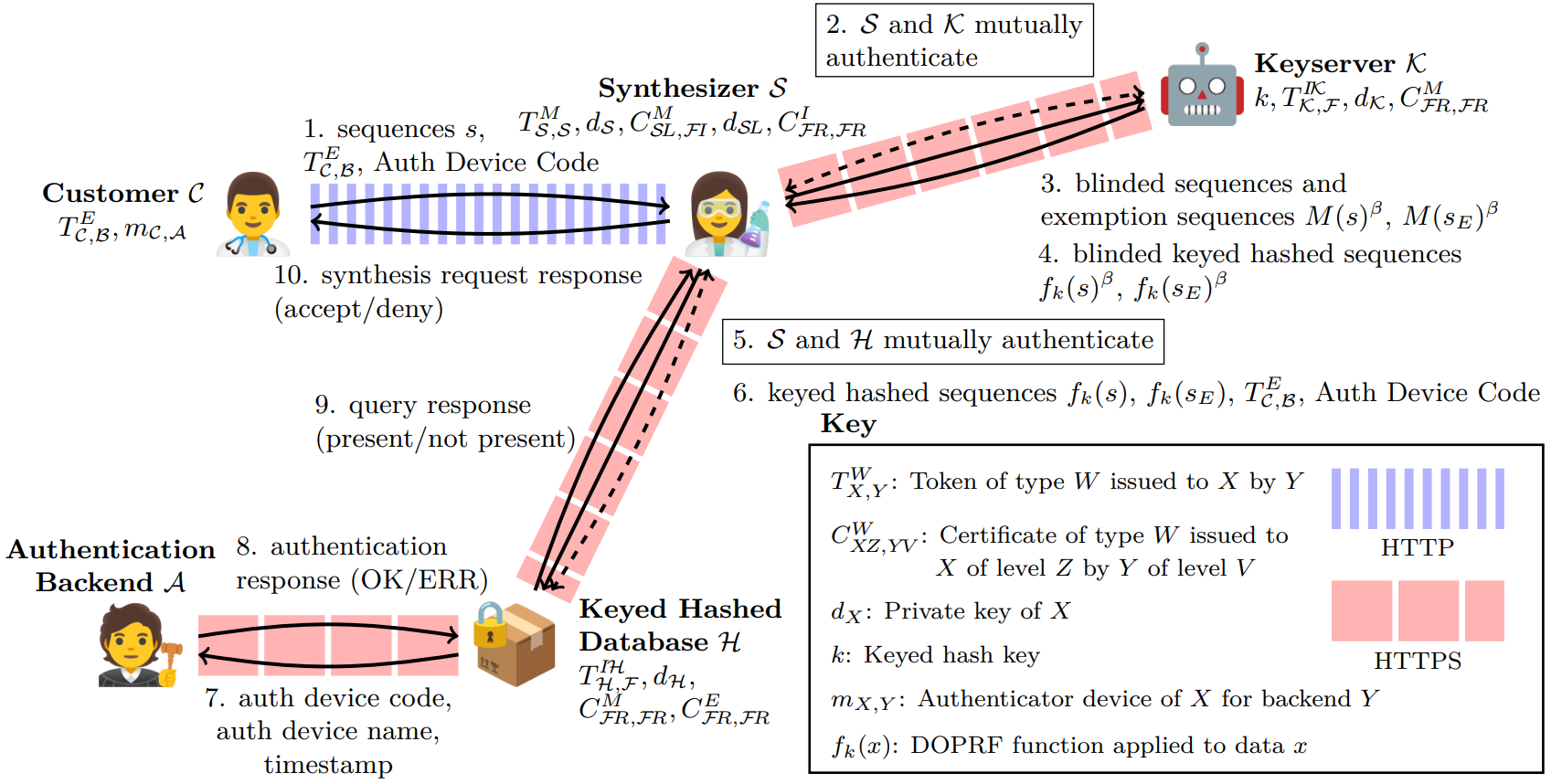}
    \caption{The exemption-handling protocol.  
    {\myC's} request includes an ELT signed by {\myB}
    and a one-time passcode (the ``Auth Device Code'').
    In {\myH}'s response to {\myS}, {\myH} sends a list of sequences that are in its database, 
    with boolean flags indicating which sequences are also in the exemption token. 
    {\myH} verifies that {\myC} included a valid passcode.
    Steps~2 and 5 call a custom mutual authentication protocol SCEP 
    (see Section~\ref{ssec:custom}).
    }
    \label{fig:exempt_proto}
\end{figure*} 


\subsection{Authentication Tokens} 
\label{ssec:authcookies}

SecureDNA uses three types of authentication tokens:
{\myS} uses \textit{Synthesizer tokens} $T_{x,y}^{M}$ to authenticate to {\myH} and {\myK} [E11,E12].
Each has a $u_{M}$ field containing an identifier for the synthesizer, and 
a rate limit $\mu$ for the synthesizer.
These tokens are bundled with chains rooted in $\cert[M]({\mathcal F}R{\mathcal F}R)$. 
Keyserver {\myK} uses \textit{Keyserver infrastructure tokens} $\token[I{\mathcal K}](xy)$ 
to authenticate to {\myS}. Each has a $u_{I\mathcal{K}}$ field containing {\myK}'s ID in the secret sharing scheme [E5,E6]. 
{\myH} uses \textit{Database infrastructure} tokens $\token[I{\mathcal H}](xy)$  
to authenticate to {\myS}. Each has an empty $u_{I\!\mathcal{H}}$ [E8,E9].
The keyserver and database infrastructure tokens are bundled with chains rooted in $\cert[I]({\mathcal F}R{\mathcal{F}}R)$.

\subsection{Exemption-List Tokens} 
\label{ssec:xtokens}

{\myC} uses \textit{exemption tokens} $T_{x,y}^{E}$  
to permit synthesis of dangerous sequences. 
Each has a $u_{E}$ containing a list of exempt sequences $s_E$, and 
an identifier $m_{{\mathcal C},A}$ for an \textit{authenticator device} 
(e.g., YubiKey) issued by {\myA} [E13,E14].
These tokens are bundled with chains rooted in $\cert[E]({\mathcal{F}}R{\mathcal{F}}R)$.
The public key $p_y$ in $T_{x,y}^{E}$ is optional, and is not used for authenticating token users. 
The only use of the key pair $(p_y,d_y)$ is to issue sub-tokens $T_{x,y}^{E'}$, 
which are copies of $T_{x,y}^{E}$ with the restriction that the
sequences in $u_{E'}$ must be a subset of the sequences in $u_{E}$.

\subsection{Source Code} 
\label{ssec:code}


The SecureDNA system is written mostly in the
Rust programming language~\cite{rust}, which enforces memory safety.
We analyzed Version 1.0.8 of the source code~\cite{sec_dna_github}, which comprises
approximately 64,000 lines of source code across nearly 300 files. 
SecureDNA adopts a TLS implementation from a Rust package.
To our knowledge there is no external documentation.
There are only sparse comments for the internals of the code and 
only sparse descriptions of user-facing functionality.
{\myF} distributes software using a trusted Linux package manager.

\section{SecureDNA Protocols} 
\label{sec:protocols}

We explain the two main protocols that the Synthesizer
{\myS} uses to screen order requests for hazards:
the basic order-request protocol and the exemption-handling protocol.
These protocols transform sequences $s$ using a keyed DOPRF $f_k(s) = M(s)^k$,
where $k$ is a key and $M$ is a cryptographic hash function.
These computations take place in a prime-order group in which the
Decisional \textit{Diffie-Hellman (DH)} Problem is hard~\cite{boneh1998decision}.
Figures~\ref{fig:basic_proto}--\ref{fig:exempt_proto} summarize how these protocols work;
for more details, see the associated Ladder 
Diagrams~\ref{fig:auth-ladder}--\ref{fig:two-ladders}.

When {\myS} communicates with the Keyserver or
the Hashed Database{\myH}, the
entities first establish authentication using a custom authentication sub-protocol
SCEP that involves exchanging nonces and authentication tokens 
(see Figures~\ref{fig:auth-ladder}--\ref{fig:two-ladders}).


\subsection{Basic Order-Request Protocol} 
\label{ssec:queryprotocol}

As shown in Figure \ref{fig:basic_proto}, the basic order-request protocol
begins with the Customer
{\myC} sending its synthesis request $s$ to {\myS}.
Synthesizer {\myS} blinds $s$ and sends it to {\myK} (actually, to several keyservers).
The keyservers apply their keyshares, and return the results to {\myS}. 
Then {\myS} combines the responses from the keyservers and unblinds the result to form the keyed hash of the request.
{\myS} sends this keyed hash to {\myH}. 
Next, {\myH} responds stating whether the keyed hash is in its database.
Finally, {\myS} reports back to {\myC} whether their synthesis request is allowed or denied.
Because the communication between {\myC} and {\myS} is intended to be hosted locally,
by default this communication occurs via HTTP;
optionally, HTTPS can be used.
Communication between {\myS} and {\myH}, and between {\myS} and {\myK}, 
occurs in a one-way authenticated TLS channel supported with certificates for {\myH} and {\myK}. At the beginning of these communications between 
{\myS} and {\myH}, and between {\myS} and {\myK}, the roles
complete a custom mutual authentication protocol SCEP 
(see Section~\ref{ssec:custom}).

\subsection{Exemption-Handling Protocol} 
\label{ssec:ELRprotocol} 

As shown in Figure~\ref{fig:exempt_proto},
the exemption protocol is similar to the basic order-request protocol. 
$C$ sends to {\myS} its synthesis request $s$, 
exemption-list token $T_{\mathcal{C},\mathcal{B}}^{E}$ 
obtained from the Biosafety Officer
{\myB}, and authenticator code $a$ (a one-time passcode)
from the authentication device listed in the exemption-list token.
{\myS} performs the same exchange involving {\myS} with {\myK} as in the basic order-request protocol. 
Then, {\myS} performs a second round with {\myK} to hash each of the sequences in $T_{\mathcal{C},\mathcal{B}}^{E}$. 
After {\myS} has assembled all keyed hashed sequences, 
{\myS} sends to {\myH} the keyed hash sequences, $T_{\mathcal{C},\mathcal{B}}^{E}$, and $a$. 
Then, {\myH} verifies the appropriateness of $T_{\mathcal{C},\mathcal{B}}^{E}$ by sending to the Authentication Backend
{\myA} the authenticator device name, $a$, and a timestamp. 
{\myA} responds with an OK or error. If {\myA} responds OK, 
{\myH} checks if the keyed hash of $s$ is in its database.
If it is, {\myH} checks if $s$ is in the exemption list. 
{\myH} then reports back to {\myS} if the keyed hash request is in its database, and 
if so, if it is in the exemption token. 
{\myS} reports back to {\myC} whether their synthesis request is granted.
The communication from {\myH} to {\myA} is an HTTPS request.
Communication between {\myS} and {\myH}, and between {\myS} and {\myK}, 
occurs in a one-way authenticated TLS channel supported with certificates for {\myH} and {\myK}.
At the beginning of these communications between 
{\myS} and {\myH}, and between {\myS} and {\myK}, the roles
complete a custom mutual authentication protocol SCEP 
(see Section~\ref{ssec:custom}).

\section{Adversarial Model}  
\label{sec:adversary}

We assume a malicious model in which a protocol communicant does not have to follow the protocol.
More specifically, we assume the DY model
in which the adversary has full control over all messages on the network and can manipulate an unbounded number of protocol sessions. The adversary can perform the roles of the legitimate
communicants. The adversary cannot break cryptographic primitives
but can compute keyed cryptographic primitives when the adversary knows the keys.
Objectives of the adversary include 
learning the sequences in the order request,
learning some or all plain text entries in the hazards database, and
causing the synthesizer to synthesize a sequence in the hazards database
(for which the adversary does not have permission to synthesize).

Inherent limitations of the system include that
the customer and synthesizer learn whether the order request is in 
the Database {\myD}, and
a malicious synthesizer can synthesize any sequence.


\section{Circumventing Rate Limiting} 
\label{sec:ratelimit}

SecureDNA includes \textit{rate-limiting mechanisms} that aim to 
make exhaustive ``dictionary attacks [on {\myD}] impossible'' by limiting the number of queries that can be made on {\myH} in a given time~\cite[p.\ 3]{SecureDNA2024}. 
We explain how rate limiting works and point out an attack that exploits
the failure of SCEP to provide mutual authentication.

The main stated purpose of rate limiting is to achieve SG1. The SecureDNA team explained
that another purpose is to detect software errors.  We note that rate limiting
can also try to defend against certain types of DoS attacks.
Without debating the significance of SG1 (see Section~\ref{ssec:goals}) and
without discussing technical biological issues involving 
the relationship of SG1 to determining functional variants of hazards, 
we will analyze SecureDNA's rate-limiting mechanisms.

\subsection{How Rate Limiting Works} 
\label{ssec:R-how}

SecureDNA includes two classes of rate-limiting mechanisms: 
(1)~software mechanisms that attempt to limit the number of times {\myK} executes the DOPRF,
which limits the number of queries {\myS} can make on {\myH}, and
(2)~auditing mechanisms that attempt to monitor, detect, and react to
unreasonable behaviors, implemented by 
non-public software (which we have not seen) and human intervention.

The software mechanisms work as follows.
Every time {\myS} executes the basic-request or exemption-handling protocol,
{\myK} and {\myH} separately record in their own query-databases
the time and number of sequences requested by {\myS}.
Entities {\myK} and {\myH} index their query-database by the 
identifier $\sigma$ in {\myS}'s authentication token {\tokS}. 
Entities {\myK} and {\myH} check if the total number of sequences requested by {\myS} 
within the previous 24 hours exceeds the rate limit $\mu$ in {\tokS} [E20].

The SecureDNA team explained that auditing mechanisms provide their main defense
against dictionary attacks and certain other malicious behaviors. They stated that
their reactive capabilities include the ability to revoke tokens, certificates,
and entire certificate chains.

\subsection{Potential Vulnerabilities} 
\label{ssec:R-vulnerabilities}

The failure of SCEP to provide mutual authentication is a serious vulnerability
for rate limiting and DoS: because neither {\myH} nor {\myK} know with whom they are
communicating, it is difficult to identify which actor is initiating a request.
The SecureDNA team stated that knowing the identity of the initiator is not necessary
because queries are associated with authentication tokens, not actors.  We will show, however, that it is possible for the adversary to steal and misuse 
authentication tokens.

Concerning the software mechanisms, the following facts might in some contexts be 
potential vulnerabilities [E21]:
(1)~{\myF} has no control over the rate limit $\mu$: 
{\myS} creates its own authentication token {\tokS}, setting its own $\mu$, 
which can be any 64-bit integer.  Note, however, that servers now limit
queries operationally (see Appendix~\ref{sec:other}). 
(2)~{\myF} has no control over the token identifier $\sigma$ in {\tokS}, which {\myS} chooses.
(3)~There is no limit on the number of authentication tokens {\myS} can create.
(4)~{\myS} can manufacture any number of valid leaf certificates.
(5)~The Auth Device Code is not bound to context, enabling a corrupt
{\myK} or {\myH} to steal and misuse it.
The SecureDNA team explained that they do not view these facts as vulnerabilities because
they guard against rate-limiting attacks primarily with auditing and re-spinning.
They also explained that, to avoid undue administrative burden, 
they do not want {\myF} to authorize tokens {\tokS} created by {\myS}.

\subsection{Attacks} 
\label{ssec:R-attacks}

(1)~We describe a rate-limiting attack that exploits the weakness that SCEP provides only one-way authentication.  In this attack, the adversary is a malicious {\myKp}, which is possible within SecureDNA's
adversarial model~\cite{baum2024CIC,baum2020cryptographic} and our DY model.
For example, {\myKp} might be the adversary performing the role of {\myK}
or it might be a compromised {\myK}.
An honest {\myS} connects with {\myKp}, which enables
{\myKp} to learn {\myS}'s authentication token without breaking TLS. 
This connection is possible if {\myS} trusts a compromised signing key included in {\myKp}'s certificate chain.
Then, following a strategy similar to that in Lowe's~\cite{Lowe1995} attack on 
NS, {\myKp} connects with an honest {\myK} masquerading as {\myS},
using {\myS}'s authentication token. 
After authenticating, the adversary {\myKp} now can issue as many queries
as they desire without detection, up to the rate limit for {\myS} and the
auditing limit. From the perspectives of {\myK} and auditors, 
the queries are associated with {\myS}.  
See Appendix~\ref{append:rate-attack} for details.

The adversary {\myKp} can repeat this attack exploiting different synthesizers without
detection by the software controls.  Eventually, the auditing controls would likely
notice a suspicious increase in the total volume of requests, even though they
would not necessarily know the cause of this increase.  The SecureDNA team stated
that their monitoring software would identify the offending server.

As a proof-of-concept, we implemented our MitM attack against a corrupt keyserver.  
The attack works as we envisioned.  For source code and pcap files, see our
artifacts~\cite{anonGitHub}.

A very similar rate-limiting attack is also possible involving a corrupt {\myHp},
who can masquerade as {\myS} to {\myH}. 
The attack can also be adapted as a DOS attack, using ideas from (2).

(2)~Before we learned of SecureDNA's auditing strategy, 
we pointed out that, if the system worked solely as defined in the publicly available software, then an adversary could carry out certain malicious activities without detection.
For example, due to intentional design and implementation features, 
anyone could circumvent query rate limits by
setting a high rate limit or
generating many additional nodes and tokens in the certificate hierarchy.
Also, an adversary could cause DoS by 
creating and misusing tokens with identifiers that 
collided with those of legitimate tokens, prompting SecureDNA
to revoke legitimate tokens (e.g., for exceeding rate limits).

We observe, however, that the adversary could not in other contexts
directly reuse a harvested authentication token 
because the adversary would also need its private key.
The adversary might not be able to reuse an ELT directly
because the ELT is also protected by a two-factor device~$m$.

\subsection{Risks} 
\label{ssec:R-risks}

The risk of Attack~(1) is low.  The adversary must trick synthesizers into
connecting with {\myKp}, for example, by compromising {\myKp} or 
manipulating the management of root certificates.
Eventually auditing will detect a suspicious increase in
total volume of queries, including queries associated with {\myKp} or {\myHp}; 
and there is limited value in learning {\myD}. 
Also, a malicious {\myKp} or {\myHp} could carry out other simpler attacks, such as
disabling rate limiting or returning incorrect answers.
Nevertheless, the adversary would likely be able to learn some entries of {\myD}.
It would be stronger security engineering to prevent this protocol-interaction
attack by requiring mutual authentication,
as Release 1.1.0 now does.

The SecureDNA team stated that Attacks~(2) would be detected and shut down
by their auditing mechanisms, and that collisions would be detected upon revocation.


\section{Weak Authentication and Inadequate Bindings} 
\label{sec:binding}

We identify structural weaknesses in SecureDNA's 
SCEP custom mutual authentication protocol and 
in bindings of ELTs, authentication tokens, and responses from {\myH}. 
These vulnerabilities violate the principle of defense in depth,
and the binding weakness permits a response-swapping attack 
if {\myS} were to  reconnect with {\myH} over the same TLS session 
(which the implementation disallows).
We begin by explaining SCEP.

\subsection{Server Connection Establishment Protocol (SCEP)}
\label{ssec:custom}

As shown in Figure~\ref{fig:auth-ladder}, 
in the basic-request and exemption-handling protocols,
when {\myS} communicates with {\myK} or {\myH}, 
rather than using \textit{mutual TLS (mTLS)}~\cite{campbell2020rfc}, 
SecureDNA engages in a {custom mutual authentication protocol}, 
called the \textit{Server Connection Establishment Protocol (SCEP)} [E15,E16,E17].
For a discussion of this design choice, see Section~\ref{sec:discussion}.
SecureDNA's documentation clearly reveals that SecureDNA intended for
SCEP to achieve mutual authentication---for example, the file containing the functionality is called ``mutual\_authentication.rs'' [E22].
SCEP begins by establishing a TLS channel, which
by default is based on one-way authentication.
During  SCEP, {\myS} receives a \textit{HTTPS request cookie} {\mycookie},
which {\myS} uses to authenticate future messages in the 
basic-request or exemption-handling protocol.
This cookie {\mycookie} is different from a SecureDNA authentication token.

\subsection{Vulnerabilities} 
\label{ssec:B-vulnerabilities}

As shown in Section~\ref{sec:SCEP-analysis}, SCEP does not achieve mutual authentication.
{\myS} authenticates the server ({\myH} or {\myK}), 
but the server does not know with whom it is communicating. 
The security properties of TLS with SCEP, including
with respect to authentication, 
are only slightly greater than those of TLS alone.
In terms of Lowe's authentication hierarchy~\cite{lowe_authentication}, 
using SCEP adds only the lowest guarantee---\textit{liveness}. 
This vulnerability stems from the omission of 
session parameters {\mycookie} and the server certificate from {\myS}'s signed response;
that is, the vulnerability stems from a failure to cryptographically bind the response to the context.


Similarly, in the basic-request and exemption-handling protocols, 
neither any response from {\myK} nor any response from {\myH} 
is cryptographically bound to {\myS}'s request.  
Consequently, an adversary might---in some situations---be able to replay
and swap these responses out of context, as we will next show.
The one-time passcode $a$ sent by {\myC} is not bound to the context
and hence can be reused by any malicious recipient.

\subsection{A Latent Response-Swapping Attack} 
\label{ssec:B-consequences}

SecureDNA's weak authentication and binding are undesirable 
characteristics that violate the principle of defense in depth.
Section ~\ref{sec:ratelimit} explains a rate-limiting attack 
made possible by SCEP's one-way authentication. 

We now explain how the inadequate binding of {\myH}'s responses might
permit a latent response-swapping attack without violating the TLS channel.
Suppose {\myS} \textit{reconnects} with {\myH} over the same TLS session (with the same keys).
The adversary could then potentially violate SG3 and SG4
by swapping a query response in the second connection 
with the corresponding one from the first connection.
This hypothetical attack does not assume a malicious {\myS}, {\myH}, or {\myK}.


To carry out this attack without detection, the adversary would have to deal with the following technical issues, which might depend in part on the TLS implementation.
First, the response swapping can be accomplished 
by manipulating bits at the TCP network layer, which is easy to do~\cite{HARRIS1999885}.
Second, the swapping must not change the TLS MAC value~\cite{rfc5246,rfc8446}, 
which depends in part on the sequence number.
For TLS 1.2 and 1.3, the sequence number is direction specific and
always begins with zero~\cite{rfc5246, rfc8446}. 
Thus, responses in the same locations from the two connections using the same keys will
have the same sequence numbers and thus can be swapped without changing the MAC value.

We observe, however, that if the adversary modified messages sent from {\myS} to {\myK}, 
the adversary would be detected given the ``active security'' measures in place 
for protecting against malicious {\myK}'s~\cite{baum2024CIC}.

\subsection{Risks} 
\label{ssec:B-risks}

The risk of the response-swapping attack depends in part on the likelihood 
of {\myS} reconnecting with {\myH} over the same TLS session. 
We understand that the SecureDNA implementation does not permit reconnections
or 0-RTTs, so this attack is not possible.
Specifically, by default, TLS 1.3 implementations set
reconnections and 0-RTT off [E23].
If, however, a future SecureDNA implementation allowed reconnections, 
the consequence could be that {\myS} takes incorrect actions, synthesizing hazards and denying legitimate synthesis requests. 
Although the overall risk might be relatively low, it
would be stronger security engineering to avoid this latent vulnerability.


The new Version 1.1.0 release includes 
an optional mode (which we have not studied)---called \textit{verifiable screening}---in which
{\myS} sends a hash of the query to {\myH}, and {\myH} returns to {\myS}
a signed hash of several items including 
the hash of the query, {\myH}'s response, and metadata.
Although the purpose is to enable {\myS} to save proof that it had screened the order, 
properly executed this strategy can also solve the binding weakness,
albeit inefficiently.

\section{Formal-Methods Analysis of SCEP}
\label{sec:SCEP-analysis}

We apply the strand-space formalism to analyze security properties 
of the custom SCEP protocol 
(see Section~\ref{ssec:custom}),
which executes within a one-way authenticated TLS session.
The purpose of SCEP is to mutually authenticate the synthesizer {\myS} 
with the infrastructure system {\myK} or {\myH}.
To carry out our analysis, we
(1)~define a \textit{strand space} (Definition~\ref{def:strand-space}) for the combined TLS-SCEP protocol,
(2)~formalize the SCEP mutual authentication security goal as a pair of logical formulas describing properties of distinct execution models on the SCEP strand space, and
(3)~use CPSA to verify the formal security goals.
Our CPSA inputs and outputs are available on GitHub~\cite{anonGitHub}.

It is useful to introduce the concept of a \textit{trace}, which
is a finite, non-empty sequence of events (message transmissions or receptions)
that take place within a designated channel.

\begin{definition}[Strand Space]
\label{def:strand-space}
    A \textit{directed term} is a pair $(d, t)$, where $t \in A$ ($A$ is the set of all possible protocol terms, or messages) and $d$ is either incoming ($-$) or outgoing ($+$).
    A \textit{strand} $s$ comprises a trace of directed terms, where $s$ specifies the actions of a legitimate party or of the adversary.
    A \textit{strand space} is a set $\Sigma$ of strands.
\end{definition}

\subsection{SCEP Strand Space}
\label{ssec:scep-strand-space}

To define the SCEP strand space (Definition~\ref{def:scep-strand-space}), we define \textit{penetrator strands} that model adversarial behavior and \textit{regular strands} that model behavior of legitimate protocol roles.
Definition~\ref{def:pen-strands} specifies the set $\Sigma_{pen}$ of penetrator strands.
Definition~\ref{def:scep-regular-strands} specifies sets of regular strands that carry out the roles of the synthesizer $\Sigma_{SCEP-\mathcal{SR}}$ and the responder $\Sigma_{SCEP-\mathcal{RS}}$.
We use the SCEP strand space, which combines the penetrator and regular strands, to model protocol executions that potentially incorporate adversarial behavior.

\begin{definition}[Penetrator Strands]
\label{def:pen-strands}
    Let $T \subseteq A$ be a set of atoms that represent principals, string literals, keys, and nonces.
    Let $K \subseteq T$ be a set of all encryption keys, including inverse keys.
    Let $e(t, k)$ be the encryption of term $t$ under key $k$.
    Depending on whether $k$ is a symmetric or asymmetric key, $e(t, k)$ may denote either symmetric or asymmetric encryption; $k$ is a symmetric key when $k^{-1} = k$.
    The set $\Sigma_{pen}$ is a set of all strands $p$, where $p$ has one of the penetrator traces below:
    \begin{enumerate}
        \item $Generate(t)$: ${[}+t{]}$ where $t \in T$.
        \item $Encrypt(g, k)$: ${[}-g,-k,+e(g,k){]}$ where $g \in A, k \in K$.
        \item $Decrypt(e(g, k), k^{-1})$: ${[}-e(g, k), -k^{-1}, +g{]}$ where $g \in A, k^{-1} \in K$.
        \item $Hash(g)$: ${[}-g, +h(g){]}$ where $g \in A$.
        \item $Concatenate(g, h)$: ${[}-g, -h, +g||h{]}$ where $g,h \in A$.
        \item $Separate(g||h)$: ${[}-g||h, +g,+h{]}$ where $g, h \in A$.
    \end{enumerate}
\end{definition}

Because SCEP depends on an existing one-way authenticated TLS connection, we provide and incorporate traces for TLS~1.2 with ephemeral DH key-exchange (Definition~\ref{def:tls-trace}).
SCEP supports TLS~1.2 and TLS~1.3 channels, specifying that TLS~1.2 must use an ephemeral DH ciphersuite.
Based on a previous analysis~\cite{SBP}, traces of TLS~1.2 with ephemeral DH and TLS~1.3 are cryptographically equivalent.
SCEP and the underlying TLS connection depend on asymmetric cryptography.
In Definition~\ref{def:pubprivk}, the function $pk$
expresses a principal's public key, and $sk$ expresses the corresponding private key.

\begin{definition}[Public and Private Keys]
\label{def:pubprivk}
Let $pk: Name \to K$ and $sk: Name \to K$ be one-to-one functions with disjoint images such that $pk(a) = sk(a)^{-1}$ for all $a \in Name$.
\end{definition}

\begin{definition}[TLS Traces]
\label{def:tls-trace}
We define traces that establish a TLS handshake between a client $C$ and a server $S$. First, we define the terms we use within the trace:
\begin{itemize}
    \item $C, S, CA \in Name$
    \item $r_C, r_S \in Nonce$
    \item $e_S, e_C \in Expt$
    \item $Cert_S: S||pk(S)||e(h(s, g^{e_S}), sk(CA)))$
    \item $S_{KE}: g^{e_S}||e(h(r_C || r_S || g^{e_S}), sk(S))$
    \item $PMS: g^{e_S \times e_C}$
    \item $C_{WRITE}: h(PMS, r_C, r_S, \text{``client-write''})$
    \item $S_{WRITE}: h(PMS, r_C, r_S, \text{``server-write''})$
    \item $C_{FMESG}: r_C || r_S ||Cert_S||S_{KE}$
    \item $C_{FIN}: \\e(h(PMS || \text{``client-fin''} || C_{FMESG}),C_{WRITE})$
    \item $S_{FIN}:\\
    e(h(PMS || \text{``server-fin"} ||C_{FMESG} ||C_{FIN}), S_{WRITE}$).
\end{itemize}

Let trace \\$Tr_{TLS-C}(C, S, CA, r_C, r_S, e_S, e_C) =$
\begin{enumerate}
    \item $+r_C$
    \item $-r_S||Cert_S||S_{KE}$
    \item $+g^{ec}||C_{FIN}$
    \item $-S_{FIN}$.
\end{enumerate}

Let the complementary trace \\$Tr_{TLS-S}(C, S, CA, r_C, r_S, e_S, e_C) =$
\begin{enumerate}
    \item $-r_C$
    \item $+r_S||Cert_S||S_{KE}$
    \item $-g^{ec}||C_{FIN}$
    \item $+S_{FIN}$.
\end{enumerate}
\end{definition}

For the SCEP strand space, we specify a simple token (Definition~\ref{def:token}) 
that acts as a certificate of a principal's public asymmetric key.
By contrast, the system implementation tokens embed custom SecureDNA certificates and corresponding signatures.
From our analysis, this simplification provides equivalent security guarantees (under the assumption that the SecureDNA foundation is an honest \textit{certificate authority (CA)}), 
while simplifying execution models resulting from the SCEP strand space.

\begin{definition}[SCEP Token]
\label{def:token}
Let $X,Y \in Name$, $data \in Text$.

\noindent Let $Tok_{X,Y} = \text{``Token''} ||X||pk(X)||Y||pk(Y)||data$.

\noindent We define an SCEP \textit{token} as a compound term with the structure

{$T_{X, Y} = Tok_{X,Y}||e(h(Tok_{X,Y}), sk(Y))$}.
\end{definition}

\begin{definition}[SCEP Traces]
\label{def:scep-traces}
We define traces that carry out SCEP between {\myS} and a responding infrastructure server {\myK} or {\myH}, which we define as {\myW}. As before, we first define the terms that we use within the traces:
\begin{itemize}
    \item $\mathcal{S}, \mathcal{W}, M, CA \in Name$
    \item $r_\mathcal{S}, r_\mathcal{W}, r'_\mathcal{S}, r'_\mathcal{W}, \omega\in Nonce$
    \item $e_\mathcal{S}, e_\mathcal{W} \in Expt$
    \item $T_{S,M}$ (Definition~\ref{def:token})
    \item $T_{\mathcal{W,CA}}$ (Definition~\ref{def:token})
    \item $\mathcal{S}_{WRITE}: h(g^{e_\mathcal{S} \times e_\mathcal{W}}, r'_\mathcal{S}, r'_\mathcal{W}, \text{``client-write''})$
    \item $\mathcal{W}_{WRITE}: h(g^{e_\mathcal{S} \times e_\mathcal{W}}, r'_\mathcal{S}, r'_\mathcal{W}, \text{``server-write''})$.
\end{itemize}

\noindent
Let $Tr_{SCEP-\mathcal{S}\mathcal{W}}(\mathcal{S}, \mathcal{W}, M, CA, r_\mathcal{S}, r_\mathcal{W}, r'_\mathcal{S}, r'_\mathcal{W}, \omega, e_\mathcal{S}, e_\mathcal{W}):$

\begin{enumerate}
    \item $Tr_{TLS-C}(\mathcal{S},\mathcal{W},CA,r'_\mathcal{S},r'_\mathcal{W}, e_\mathcal{S}, e_\mathcal{W})$
    (Definition~\ref{def:tls-trace})
    \item $+e(r_\mathcal{S}||T_{S,M}, \mathcal{S}_{WRITE})$
    \item $-e(\omega || r_{\mathcal{W}} || T_{\mathcal{W,CA}} \\||e(h(\text{``server-mutauth"}, r_{\mathcal{S}}, r_{\mathcal{W}},T_{\mathcal{W,CA}}), pk(\mathcal{W})),\\\ \mathcal{W}_{WRITE})$
    \item $+e(\omega\\||e(h(\text{``client-mutauth"}, r_{\mathcal{S}}, r_{\mathcal{W}},T_{S,M}), pk(\mathcal{S})),\\\ \mathcal{S}_{WRITE})$.
\end{enumerate}

Let $Tr_{SCEP-\mathcal{W}\mathcal{S}}$ be complementary to $Tr_{SCEP-\mathcal{S}\mathcal{W}}$, such that $Tr_{SCEP-\mathcal{W}\mathcal{S}}$ (1)~includes $Tr_{TLS-S}$ rather than $Tr_{TLS-C}$, and
(2)~inverts the directions of terms 2, 3, and 4.
\end{definition}

\begin{definition}
\label{def:scep-regular-strands}
We define two sets of ``regular'' strands that carry out the SCEP protocol:
\begin{enumerate}
    \item Let $\Sigma_{SCEP-\mathcal{SW}}$ be the set of all strands with a trace of the form $Tr_{SCEP-\mathcal{S}\mathcal{W}}$.
    \item Let $\Sigma_{SCEP-\mathcal{WS}}$ be the set of all strands with a trace of the form $Tr_{SCEP-\mathcal{W}\mathcal{S}}$.
\end{enumerate}
\end{definition}

\begin{definition}[SCEP Strand Space]
\label{def:scep-strand-space}
The SCEP strand space comprises $\Sigma_{pen} \cup \Sigma_{SCEP-\mathcal{SW}} \cup \Sigma_{SCEP-\mathcal{WS}}$.
\end{definition}

\subsection{Security Goals}
\label{ssec:scep-security-goals}

We state formal security goals (Definitions~\ref{def:scep-sw-secure}, \ref{def:scep-ws-secure}) for SCEP.
To begin, we first formalize confidentiality (Definition~\ref{def:goal_conf}) and agreement (Definition~\ref{def:goal_agree}).
These high-level security goals assert that, within a given strand space $\mathcal{P}$ and under an explicit set of \textit{origination} assumptions, there exist no execution models (i.e., \textit{shapes}), for which the specified properties do not hold.
Later, we compose these definitions of confidentiality and agreement to define specific security goals for SCEP and the SecureDNA query and exemption protocols.

To prove confidentiality properties of execution models within a strand space $\mathcal{P}$, we define \textit{listener strands} (Definition~\ref{def:listener-strand}).
To test the confidentiality of a sensitive value $x$, we define a listener strand with trace $Tr_\mathcal{L}(x)$.
An execution model of $\mathcal{P}$ that satisfies $Tr_\mathcal{L}(x)$ illustrates a manner in which $x$ leaks to the adversary.
Thus, to prove confidentiality, we show that there exists no listener strand $Tr_\mathcal{L}(x)$ in any complete execution of $\mathcal{P}$.

\begin{definition}[Listener Strands]
\label{def:listener-strand}
A \textit{listener strand} for $t$ is any strand with trace $Tr_\mathcal{L}(t) = {[}-t, +t {]}$, where $t \in A$ is an arbitrary term. Let $\Sigma_\mathcal{L}$ be the set of all listener strands.
\end{definition}

\begin{definition}[Confidentiality]
\label{def:goal_conf}
    Let $Tr_\mathcal{X}$ be any trace in $\mathcal{P}$.
    Let $t$ be any term for which to test confidentiality.
    Let $\mathbf{orig}$ be the conjunction of origination predicates.
    Let $\vec{V}$ be a list of parameters of trace $Tr_\mathcal{X}$, such that the term $t$ is an element of $\vec{V}$.
    For any strand $r \in \mathcal{P}$, $r:Tr$ indicates that $r$ has trace $Tr$.
    Then,
    
    \noindent $Conf_{\mathcal{P}}(Tr_\mathcal{X},t,\mathbf{orig}) \iffdef \\ (\forall r,l\in\mathcal{P}\cup\Sigma_\mathcal{L},r:Tr_\mathcal{X}(\vec{V}) \wedge l:Tr_\mathcal{L}(t) \wedge \mathbf{orig} \implies \bot)$.

\end{definition}

\begin{definition}[Agreement]
\label{def:goal_agree}
    Let $Tr_\mathcal{X}$ be any trace in $\mathcal{P}$.
    Let $Tr_\mathcal{Y}$ be any trace in $\mathcal{P}$ such that $Tr_\mathcal{X} \not= Tr_\mathcal{Y}$.
    Let $\vec{C}$ be any list of values for which to test agreement.
    Let  $\vec{V}_\mathcal{X}$ be a list of parameters of the trace $Tr_\mathcal{X}$ and  $\vec{V}_\mathcal{Y}$ be a list of the parameters of trace $Tr_\mathcal{Y}$.
    Let $\mathbf{orig}$ be any conjunction of origination predicates.
    Then,

    \noindent $Agree_\mathcal{P}(Tr_\mathcal{X}, Tr_\mathcal{Y}, \vec{C}, \mathbf{orig}) \iffdef \\ (\forall r \in \mathcal{P}, r:Tr_\mathcal{X}(\vec{V_\mathcal{X}} \cup \vec{C}) \wedge \mathbf{orig} \implies \\ \exists s \in \mathcal{P}, s:Tr_\mathcal{Y}(\vec{V_\mathcal{Y}} \cup \vec{C}))$.
\end{definition}

\begin{definition}[SCEP Terms]
Let $\vec{T}$ be the list of terms:
\begin{enumerate}
    \item $\mathcal{S}, \mathcal{W}, M, CA \in Name$
    \item $r_\mathcal{S}, r_\mathcal{W}, r'_\mathcal{S}, r'_\mathcal{W}, \omega \in Nonce$
    \item $e_\mathcal{S}, e_\mathcal{W} \in Expt$.
\end{enumerate}
\end{definition}

We formalize security goals of SCEP from the perspective of {\myS} (Definition~\ref{def:scep-sw-secure}) and from the perspective of {\myW} (Definition~\ref{def:scep-ws-secure}).
Each of these definitions specifies origination assumptions of the cryptographic perspective of {\myS} or {\myW}, and comprises a conjunction of subgoals: (1)~confidentality of $\omega$, and (2)~agreement on the identities {\myS} and {\myW}, the corresponding nonces $r_{\mathcal{S}}, r_\mathcal{W}$, and $\omega$.
Should both goals hold for all unique execution models in the SCEP strand space, we prove that $\omega$ does not leak to the adversary and that {\myS} and {\myW} achieve 
\textit{injective agreement}~\cite{lowe_authentication}
on the SCEP session parameters.
Because of a crucial weakness in the SCEP (Section~\ref{ssec:scep-analysis}), we find SCEP is not ``{\myW}-{\myS}'' secure.

\begin{definition}[SCEP $\mathcal{S}-\mathcal{W}$ Secure]
\label{def:scep-sw-secure}
Let $\mathbf{orig}$ be the conjunction of the following origination assumptions on $\vec{T}$:
\begin{enumerate}
    \item $Non(sk(\mathcal{S}))$, $Non(sk(\mathcal{W}))$, $Non(sk(M))$, $Non(sk(CA))$, where $Non$ 
    indicates non-origination, and
    \item $Uniq(\omega)$, $Uniq(r_\mathcal{S})$, $Uniq(r'_\mathcal{S})$, $Uniq(e_\mathcal{S})$, $Uniq(e_\mathcal{W}))$, where $Uniq$ indicates unique origination.
\end{enumerate}

\textit{Uniquely originating} terms are unknown to protocol participants and the network until a legitimate strand emits them as part of a message, enabling us to model random nonces, fresh secret keys, and other values that must be unique for each execution of a protocol.
\textit{Non-Originating} values are values such as private keys, which the adversary does not know, cannot guess, and will never appear on the network in a decryptable form.

An SCEP strand space $\mathcal{P}$ is $\mathcal{S}-\mathcal{W}$ secure 
\textit{if and only if (iff)} the conjunction of the 
following security goals holds:
\begin{enumerate}
    \item $Conf_\mathcal{P}(Tr_{SCEP-\mathcal{SW}}(\vec{T}), \omega, \mathbf{orig})$, and
    \item $Agree_\mathcal{P}(\\Tr_{SCEP-\mathcal{SW}}(\vec{T}), Tr_{SCEP-\mathcal{WS}}(\vec{T})$,  \
    ${[\mathcal{S,W},r_\mathcal{S}, r_\mathcal{W}, \omega {]}, \mathbf{orig})}$.
\end{enumerate}
\end{definition}

\begin{definition}[SCEP $\mathcal{W}-\mathcal{S}$ Secure]
\label{def:scep-ws-secure}
Let $\mathbf{orig}$ be the conjunction of the following origination assumptions on $\vec{T}$:
\begin{enumerate}
    \item $Non(sk(\mathcal{S}))$, $Non(sk(\mathcal{W}))$, $Non(sk(M))$, $Non(sk(CA))$, where $Non$ indicates non-origination, and
    \item $Uniq(\omega)$, $Uniq(r_\mathcal{W})$, $Uniq(r'_\mathcal{W})$, $Uniq(e_\mathcal{S})$, $Uniq(e_\mathcal{W}))$, where $Uniq$ indicates unique origination.
\end{enumerate}

An SCEP strand space $\mathcal{P}$ is $\mathcal{W}-\mathcal{S}$ secure iff the conjunction of the following security goals holds:
\begin{enumerate}
    \item $Conf_\mathcal{P}(Tr_{SCEP-\mathcal{WS}}(\vec{T}), \omega, \mathbf{orig})$, and
    \item $Agree_\mathcal{P}(Tr_{SCEP-\mathcal{WS}}(\vec{T}),\\ Tr_{SCEP-\mathcal{SW}}(\vec{T}), {[\mathcal{S,W},r_\mathcal{S}, r_\mathcal{W}, \omega {]}, \mathbf{orig})}$.
\end{enumerate}
\end{definition}

\subsection{Analysis}
\label{ssec:scep-analysis}

Using CPSA, we describe \textit{all} minimal, essentially different execution models (shapes) within a strand space under a specific security goal's origination assumptions.
Theorems~\ref{thrm:scep-s-w} and~\ref{thrm:scep-w-s} prove assertions about these SCEP security goals.
A security goal holds iff it is true for all shapes possible within a strand space, which
CPSA might verify by exhaustive search.
A single counterexample is sufficient to disprove a security goal, when CPSA constructs a potential attack.

\begin{theorem}[SCEP $\mathcal{S}-\mathcal{W}$ Security]
\label{thrm:scep-s-w}
If $\mathcal{P}$ is an SCEP strand space, then $\mathcal{P}$ is $\mathcal{S}-\mathcal{W}$ secure.
\end{theorem}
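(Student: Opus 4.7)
The plan is to use CPSA, as described in Section~\ref{ssec:cpsa}, to exhaustively enumerate all minimal essentially different execution models (shapes) of the SCEP strand space of Definition~\ref{def:scep-strand-space} under the origination assumptions of Definition~\ref{def:scep-sw-secure}, starting from the perspective of a completed $\Sigma_{SCEP-\mathcal{SW}}$ strand. Because the $\mathcal{S}-\mathcal{W}$ security goal is a conjunction of a confidentiality clause on $\omega$ and an injective agreement clause on $\mathcal{S}, \mathcal{W}, r_\mathcal{S}, r_\mathcal{W}, \omega$, I would verify each conjunct in a separate CPSA run and then combine the two verdicts, relying on the fact that CPSA's shape enumeration is sound for universally-quantified strand-space statements.

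First I would build a CPSA input skeleton in which a single synthesizer strand with trace $Tr_{SCEP-\mathcal{S}\mathcal{W}}$ has run to completion on the variables of $\vec{T}$. The non-origination assumptions on $sk(\mathcal{S})$, $sk(\mathcal{W})$, $sk(M)$, and $sk(CA)$ close off every adversary strategy that requires forging a signature inside the TLS handshake, the token $T_{\mathcal{W},CA}$, or the mutual-auth payloads of SCEP messages~3 and~4. The unique-origination assumptions on $\omega$, $r_\mathcal{S}$, $r'_\mathcal{S}$, $e_\mathcal{S}$, and $e_\mathcal{W}$ pin the relevant nonces and DH exponents to a single point of origin, without which the shape search would not terminate.

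For the confidentiality subgoal, I would extend the skeleton with a listener strand (Definition~\ref{def:listener-strand}) for $\omega$ and ask CPSA to close it. Since $\omega$ first appears on the network inside message~3 encrypted under $\mathcal{W}_{WRITE} = h(g^{e_\mathcal{S}\times e_\mathcal{W}}, r'_\mathcal{S}, r'_\mathcal{W}, \text{``server-write''})$, and the only path to that hashed secret passes through the certified TLS handshake whose server signature requires $sk(\mathcal{W})$ (which is non-originating), I expect CPSA to report the empty set of shapes, establishing $Conf_{\mathcal{P}}(Tr_{SCEP-\mathcal{S}\mathcal{W}}(\vec{T}),\omega,\mathbf{orig})$. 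For the agreement subgoal, the key observation is that the ciphertext of $h(\text{``server-mutauth''}, r_\mathcal{S}, r_\mathcal{W}, T_{\mathcal{W},CA})$ under $\mathcal{W}$'s signing key, received by $\mathcal{S}$ inside message~3, contains the uniquely originating $r_\mathcal{S}$; by an authentication-test argument CPSA should force the existence of a matching $\Sigma_{SCEP-\mathcal{W}\mathcal{S}}$ strand that originated $r_\mathcal{W}$ and $\omega$ and is bound through the certificate chain to the principal $\mathcal{W}$, yielding a unique shape that witnesses agreement on all five parameters.

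The main obstacle will be faithfully encoding the composed TLS-plus-SCEP trace so that CPSA correctly tracks the dependence of $\mathcal{S}_{WRITE}$ and $\mathcal{W}_{WRITE}$ on the DH secret $g^{e_\mathcal{S}\times e_\mathcal{W}}$ without introducing spurious shapes from treating the hash-derived session keys as opaque atoms. A related subtlety is modeling $T_{\mathcal{W},CA}$ so that $CA$'s signature truly binds $pk(\mathcal{W})$ to the name $\mathcal{W}$ while not overconstraining the adversary's ability to mint other certificates under $CA$; the previously established cryptographic equivalence of TLS~1.2 with ephemeral DH and TLS~1.3 cited in Definition~\ref{def:tls-trace} lets me avoid redoing the TLS analysis from scratch. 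I would cross-check the CPSA verdict against the hand-derived authentication-test argument above, since silent modeling errors---rather than logical gaps---are the most likely failure mode of this style of proof.
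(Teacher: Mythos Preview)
Your proposal is correct and follows essentially the same approach as the paper: the paper's proof is a two-sentence ``proof by enumeration'' stating that CPSA exhaustively searches all shapes under the $\mathcal{S}-\mathcal{W}$ origination assumptions and finds none contradicting the goal. Your write-up is considerably more detailed---splitting the confidentiality and agreement conjuncts into separate CPSA runs, sketching the authentication-test intuition for why the server-signed hash containing $r_\mathcal{S}$ forces a matching $\Sigma_{SCEP-\mathcal{WS}}$ strand, and flagging the DH/session-key modeling pitfalls---but the underlying method and the conclusion are the same.
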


\begin{proof}[Proof by Enumeration]
Carrying out an exhaustive search, CPSA enumerates all 
essentially different (unique) shapes under the $\mathcal{S}-\mathcal{W}$ security goal assumptions.
None of these shapes contradict the security goal.
\end{proof}

Theorem~\ref{thrm:scep-s-w} holds because {\myS} establishes a confidental and authenticated TLS channel with {\myW}.
As a result, the synthesizer believes $\omega$ to be confidential and agrees on $\omega$ and the nonces $r_\mathcal{S}, r_\mathcal{W}$ with the responder.

\begin{theorem}[SCEP $\mathcal{W}-\mathcal{S}$ Security]
\label{thrm:scep-w-s}
If $\mathcal{P}$ is an SCEP strand space, then $\mathcal{P}$ is \textbf{not} $\mathcal{W}-\mathcal{S}$ secure.
\end{theorem}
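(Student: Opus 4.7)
The plan is to disprove $\mathcal{W}-\mathcal{S}$ security by exhibiting a single shape in the SCEP strand space that contains a completed $Tr_{SCEP-\mathcal{WS}}$ strand but no matching $Tr_{SCEP-\mathcal{SW}}$ strand agreeing on the parameters $[\mathcal{S},\mathcal{W},r_\mathcal{S},r_\mathcal{W},\omega]$; by Definition~\ref{def:scep-ws-secure}, one such counter-shape is sufficient. The search will be carried out with CPSA, seeded with a single regular responder strand of trace $Tr_{SCEP-\mathcal{W}\mathcal{S}}$ under the $\mathcal{W}-\mathcal{S}$ origination assumptions of Definition~\ref{def:scep-ws-secure}.

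First, I would locate the structural weakness that makes the counter-shape possible. Inspection of the final client message in Definition~\ref{def:scep-traces} shows that the client signature covers only $h(\text{``client-mutauth''}, r_\mathcal{S}, r_\mathcal{W}, T_{\mathcal{S},M})$: it omits the session cookie $\omega$, the responder certificate $T_{\mathcal{W},CA}$, and the responder's identity $\mathcal{W}$. Because $sk(\mathcal{S})$ is non-originating only for the specific honest $\mathcal{S}$ in the model, the strand space still admits a second name $\mathcal{W}'$ (distinct from $\mathcal{W}$) for which $sk(\mathcal{W}')$ is adversary-known, corresponding to the corrupt keyserver $\mathcal{K}'$ or corrupt hashed-database $\mathcal{H}'$ of Section~\ref{sec:ratelimit}. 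This is the same structural omission exploited by Lowe's attack on Needham--Schroeder~\cite{Lowe1995}.

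Second, I would describe the shape CPSA is expected to produce, which realizes a Lowe-style interaction attack. The adversary completes one $Tr_{SCEP-\mathcal{SW}}$ instance between an honest $\mathcal{S}$ and a corrupt $\mathcal{W}'$: since the adversary knows $sk(\mathcal{W}')$, it can terminate $\mathcal{S}$'s outer TLS handshake, decrypt the $\mathcal{S}_{WRITE}$-protected fourth message, and recover the bare signature $\{h(\text{``client-mutauth''},r_\mathcal{S},r_\mathcal{W},T_{\mathcal{S},M})\}_{sk(\mathcal{S})}$. In parallel, the adversary runs a second $Tr_{SCEP-\mathcal{SW}}$ instance toward the honest $\mathcal{W}$, choosing the nonce it sends to $\mathcal{S}$ as $\mathcal{W}'$ to equal the $r_\mathcal{W}$ it receives from $\mathcal{W}$, and echoing the $r_\mathcal{S}$ received from $\mathcal{S}$ into its session with $\mathcal{W}$; because the signature is not bound to $\omega$ or to $T_{\mathcal{W}',CA}$, it is equally valid in the $\mathcal{S}\leftrightarrow\mathcal{W}$ context, so the adversary re-encrypts it under the $\mathcal{S}_{WRITE}$ key of its own $\mathcal{S}\leftrightarrow\mathcal{W}$ session and forwards it. $\mathcal{W}$ accepts, yet no $Tr_{SCEP-\mathcal{SW}}$ strand in the bundle has parameter $\mathcal{W}$, so agreement fails.

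Finally, I would verify that the counter-shape respects every origination assumption of Definition~\ref{def:scep-ws-secure}: all four non-originating private keys remain uncompromised (the adversary only uses $sk(\mathcal{W}')$, which is not constrained); $\omega$ and $r_\mathcal{W}$ originate uniquely on $\mathcal{W}$'s strand, and $r_\mathcal{W}'$, $e_\mathcal{S}$, $e_\mathcal{W}$ likewise retain uniqueness because the two TLS sessions use independent exponents and nonces. The main obstacle I anticipate is convincing the reader (and CPSA) that the TLS envelope really does not prevent the relay: this requires explicitly modeling that the corrupt $\mathcal{W}'$ terminates TLS with $\mathcal{S}$ (so $\mathcal{S}_{WRITE}$ in that session is derivable by the adversary) and separately negotiates TLS with $\mathcal{W}$ as a fresh client, so the encryption layers peel off and re-wrap around the unbound signature without invoking any cryptographic break. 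Once that modeling detail is in place, CPSA terminates with the expected shape and the theorem follows.
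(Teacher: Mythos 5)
Your proposal is correct and takes essentially the same route as the paper: a proof by counterexample in which CPSA exhibits a Lowe-style parallel-session MitM shape, enabled by one-way TLS and by the client's signed hash omitting $\omega$, the responder token, and the responder identity, so that a completed $Tr_{SCEP-\mathcal{WS}}$ strand has no agreeing $Tr_{SCEP-\mathcal{SW}}$ strand. The only difference is that the paper's shapes refute both subgoals (confidentiality of $\omega$ and agreement) while you refute agreement alone, which still suffices because $\mathcal{W}-\mathcal{S}$ security is defined as the conjunction of the two.
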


\begin{proof}[Proof by Counterexample]
Using CPSA, we identify all unique shapes that contradict the confidentiality and agreement subgoals.

\textit{Confidentiality.} In each shape, a listener strand learns the value of $\omega$ when $\mathcal{W}$ communicates with penetrator strands masquerading as a legitimate client.

\textit{Agreement.}
In each shape, a legitimate strand with the trace of $\mathcal{W}$ fails to agree on the values of $\omega, r_\mathcal{S}, r_\mathcal{W}$ with any regular strand.
\end{proof}

In counterexamples that prove Theorem~\ref{thrm:scep-w-s}, penetrator strands establish parallel TLS connections with the legitimate $\mathcal{W}$ and {\myS}.
Consequently, an adversary is able to break the confidentiality and agreement subgoals.
This failure results from (1)~the one-way authentication of TLS, which only authenticates the server to the client, and (2)~SCEP's failure to mutually authenticate {\myS} and {\myW}.

SCEP fails to mutually authenticate because it fails to make explicit the intended recipients of Messages~(3) and (4) in the trace (Definition~\ref{def:scep-traces}), enabling a MitM attack.
This flaw is similar to that of the 1978 NS public-key protocol.
Due to this weakness, composing SCEP with TLS produces an outcome no more secure than one-way authenticated TLS, which fails to satisfy the SCEP goal of mutual authentication.
In Section~\ref{sec:improvements}, we propose and formally verify improvements to the SCEP that enable mutual authentication, and thus $\mathcal{W}-\mathcal{S}$ security.



\section{Suggested Improvements, Including to SCEP}
\label{sec:improvements}

We propose and verify a correction to the SCEP that satisfies our formal security goals (Section~\ref{ssec:scep-security-goals}).
In summary, the SCEP fails to mutually authenticate because it does not bind critical values (token of the other party, request cookie) to the hash that each communicant signs and transmits.
The result is that {\myS} is certain of the identity of {\myW} (because SCEP runs within a one-way authenticated TLS session), but {\myW} has no guarantee that the cookie $\omega$ remains confidential, and {\myW} cannot determine that it communicates with any legitimate instance of {\myS}.
To correct this error, we modify the SCEP traces in Section~\ref{ssec:scep-strand-space} and repeat our analysis on the resulting strand space for our improvement,
which we call SCEP+.

First, we update the SCEP traces (Definition~\ref{def:scep-traces}) to include additional information in the hashes by incorporating both communicant tokens and the cookie into each hash (Definition~\ref{def:scep-plus-traces}).
Each party signs its hash using a key unknown to the adversary. This step binds $\omega$ and the tokens $T^M_{\mathcal{S},M}, T^{IF}_{\mathcal{W,CA}}$ within a single SCEP session.


\begin{definition}
[SCEP+ Traces]
\label{def:scep-plus-traces}
Let $Tr_{SCEP+-\mathcal{SW}}$ and $Tr_{SCEP+-\mathcal{WS}}$ modify the traces $Tr_{SCEP-\mathcal{SW}}$ and $Tr_{SCEP-\mathcal{WS}}$ (Definition~\ref{def:scep-traces}) with the alterations:
\begin{enumerate}
    \item $h(\text{``server-mutauth"}, r_{\mathcal{S}}, r_{\mathcal{W}},\omega, T_{\mathcal{S},M}, T_{\mathcal{W,CA}})$ 
    \item $h(\text{``client-mutauth"}, r_{\mathcal{S}}, r_{\mathcal{W}}, \omega, T_{\mathcal{S},M}, T_{\mathcal{W,CA}})$.
\end{enumerate}
\end{definition}


Next, we update the definitions of SCEP $\mathcal{S-W}$ (Definition~\ref{def:scep-sw-secure}) and $\mathcal{W-S}$ (Definition~\ref{def:scep-ws-secure}) security with the $SCEP+$ traces.
Because we are not adding any terms, the security goals do not change.
We refer to these security goals as $SCEP+$ $\mathcal{S-W}$ and $SCEP+$ $\mathcal{W-S}$.

Finally, we state and prove two new theorems that assert confidentiality of $\omega$ and agreement on the session parameters for SCEP+.
Because SCEP+ includes sufficient encrypted information in each message, both parties are able to authenticate each other.
As a result, we prove 
Theorem~\ref{thrm:scep-plus-s-w} and Theorem~\ref{thrm:scep-plus-w-s} 
using CPSA.

\begin{theorem}[SCEP+ $\mathcal{S}-\mathcal{W}$ Security]
\label{thrm:scep-plus-s-w}
If $\mathcal{P}$ is an SCEP+ strand space, then $\mathcal{P}$ is $\mathcal{S}-\mathcal{W}$ secure.
\end{theorem}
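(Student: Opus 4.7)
The plan is to mirror the approach used for Theorem~\ref{thrm:scep-s-w}: encode the SCEP+ strand space $\Sigma_{pen} \cup \Sigma_{SCEP+-\mathcal{SW}} \cup \Sigma_{SCEP+-\mathcal{WS}}$ into CPSA, seed it with a point-of-view skeleton representing a full run of the $Tr_{SCEP+-\mathcal{SW}}$ trace under the origination assumptions of Definition~\ref{def:scep-sw-secure}, and let CPSA enumerate every essentially different shape. The claim then reduces to inspecting each returned shape and checking (i) that no shape contains a listener $Tr_\mathcal{L}(\omega)$, establishing $Conf_\mathcal{P}(Tr_{SCEP+-\mathcal{SW}}(\vec{T}), \omega, \mathbf{orig})$, and (ii) that every shape contains a matching regular strand with trace $Tr_{SCEP+-\mathcal{WS}}$ agreeing on the tuple $\mathcal{S}, \mathcal{W}, r_\mathcal{S}, r_\mathcal{W}, \omega$, establishing the agreement subgoal.

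Intuitively, I would expect the result to follow without surprises because the $\mathcal{S}-\mathcal{W}$ direction was already secure for bare SCEP (Theorem~\ref{thrm:scep-s-w}): the synthesizer sits inside a one-way authenticated TLS tunnel whose server certificate is rooted at the honest $CA$, so $\mathcal{W}_{WRITE}$ and $\mathcal{S}_{WRITE}$ are confidential to $\{\mathcal{S}, \mathcal{W}\}$, and $\omega$ originates inside an encryption under $\mathcal{W}_{WRITE}$. The SCEP+ modification of Definition~\ref{def:scep-plus-traces} only \emph{adds} terms ($\omega$ and the two tokens) inside the two signed hashes; it neither removes any cryptographic structure nor exposes any additional secret on the network. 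Hence any shape that violated $\mathcal{S}-\mathcal{W}$ security in SCEP+ would project onto a violating shape in SCEP, contradicting Theorem~\ref{thrm:scep-s-w}. This observation also gives a useful sanity check against the CPSA output.

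The main obstacle will be purely technical rather than conceptual: coaxing CPSA to terminate on the combined TLS$+$SCEP+ skeleton. The trace $Tr_{TLS-C}$ introduces a DH exponentiation and several hash/encryption layers, and stacking the two new signed hashes of Definition~\ref{def:scep-plus-traces} on top substantially enlarges the term algebra and the number of candidate unifiers. I would mitigate this by reusing the same TLS abstraction already justified in Definition~\ref{def:tls-trace} (treating the handshake as a cryptographically equivalent black box per the cited previous analysis~\cite{SBP}), by declaring the token certificate authorities and private keys non-originating up front, and by constraining the uniquely-originating nonces and exponents exactly as in $\mathbf{orig}$. A secondary concern is confirming that the new binding terms inside the hashes do not accidentally permit the adversary to reuse a hash from a $Tr_{SCEP+-\mathcal{WS}}$ strand in a $Tr_{SCEP+-\mathcal{SW}}$ strand (or vice versa); the ``server-mutauth'' versus ``client-mutauth'' tag literals rule this out, and I would verify that CPSA indeed respects this distinction by examining the produced shapes.

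If CPSA returns exactly the expected homomorphic image of the point-of-view skeleton, with a companion $Tr_{SCEP+-\mathcal{WS}}$ strand and no listener on $\omega$, the proof by enumeration is complete, exactly as in Theorem~\ref{thrm:scep-s-w}. The companion theorem, Theorem~\ref{thrm:scep-plus-w-s}, will then follow by a symmetric argument, with the crucial difference that the added binding of $\omega$ and $T_{\mathcal{S},M}$ inside the ``client-mutauth'' hash is precisely what closes the Lowe-style gap identified in Theorem~\ref{thrm:scep-w-s}.
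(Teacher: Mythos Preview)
Your proposal is correct and takes essentially the same approach as the paper: encode the SCEP+ strand space in CPSA, seed with the $Tr_{SCEP+-\mathcal{SW}}$ point-of-view skeleton under the $\mathbf{orig}$ assumptions of Definition~\ref{def:scep-sw-secure}, and verify by exhaustive enumeration that no returned shape violates the confidentiality or agreement subgoals. The paper's proof is in fact even terser---a single sentence asserting that no contradicting shape exists---so your added monotonicity intuition (that SCEP+ only strengthens the signed hashes and hence cannot break what SCEP already achieved in this direction) and your remarks on CPSA termination are welcome elaboration rather than a departure.
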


\begin{proof}[Proof by Enumeration]
There exists no shape that contradicts $\mathcal{S}-\mathcal{W}$ security.
\end{proof}

\begin{theorem}[SCEP+ $\mathcal{W}-\mathcal{S}$ Security]
\label{thrm:scep-plus-w-s}
If $\mathcal{P}$ is an SCEP+ strand space, then $\mathcal{P}$ is $\mathcal{W}-\mathcal{S}$ secure.
\end{theorem}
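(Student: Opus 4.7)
The plan is to mirror the enumeration strategy that established Theorem~\ref{thrm:scep-plus-s-w}: encode $Tr_{SCEP+\text{-}\mathcal{WS}}$ together with the $\mathcal{W}$-$\mathcal{S}$ origination assumptions from Definition~\ref{def:scep-ws-secure} as a CPSA skeleton with the point-of-view strand fixed to a full $\mathcal{W}$ run, let CPSA enumerate every essentially different shape, and check that each shape satisfies both the confidentiality and the agreement subgoals. As in Section~\ref{ssec:scep-analysis}, a single counterexample would suffice to refute the theorem, and the absence of such a counterexample at termination establishes it.

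The substantive structural argument that I expect to control the enumeration turns on the new binding in SCEP+. Message~4 now carries $h(\text{``client-mutauth''}, r_{\mathcal{S}}, r_{\mathcal{W}}, \omega, T_{\mathcal{S},M}, T_{\mathcal{W,CA}})$ signed under $sk(\mathcal{S})$, and $sk(\mathcal{S})$ is non-originating by assumption. Hence any valid message~4 received by $\mathcal{W}$ must have been produced by a regular $Tr_{SCEP+\text{-}\mathcal{SW}}$ strand, not by a penetrator strand from $\Sigma_{pen}$. Because $r_{\mathcal{W}}$ and $\omega$ are uniquely originating at $\mathcal{W}$ and appear inside that signed hash, the matching $\mathcal{S}$ strand must have received them, via the intermediate TLS records, from the very $\mathcal{W}$ strand under analysis, so CPSA is forced to resolve the skeleton with a matching $SCEP+\text{-}\mathcal{SW}$ strand that commits to the same $\mathcal{S}, \mathcal{W}, r_{\mathcal{S}}, r_{\mathcal{W}}$, and $\omega$. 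This is precisely the NS-style binding that was missing in SCEP and whose absence produced the counterexamples of Theorem~\ref{thrm:scep-w-s}. The confidentiality subgoal then follows as a corollary: the only network appearances of $\omega$ are inside TLS records keyed by $\mathcal{W}_{WRITE}$, a key derived from the DH exchange authenticated by the non-originating $sk(\mathcal{W})$, and no $SCEP+\text{-}\mathcal{SW}$ trace ever re-emits $\omega$ in a form decipherable to the adversary, so no shape can contain a listener strand $Tr_{\mathcal{L}}(\omega)$.

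The hardest part will be ensuring that the CPSA encoding of the combined TLS and SCEP+ traces is faithful enough to block the spurious attacks that tripped up SCEP, while still admitting the legitimate shape. Concretely, I must represent the derivation of $\mathcal{W}_{WRITE}$ and $\mathcal{S}_{WRITE}$ from the TLS handshake so that adversarial ephemerals used in a separate, parallel TLS session with $\mathcal{W}$ cannot substitute for the ephemerals bound into the $\omega$-bearing record on the point-of-view strand; otherwise CPSA might report a false shape in which a penetrator-mediated TLS session lets $\omega$ escape without invalidating the $\mathcal{W}$ strand. Once this modeling is right, the new binding of $T_{\mathcal{S},M}, T_{\mathcal{W,CA}}$, and $\omega$ into the signed hash does the rest of the work, and the CPSA run should terminate with exactly the single shape corresponding to an honest $\mathcal{S}$--$\mathcal{W}$ exchange, witnessing both subgoals.
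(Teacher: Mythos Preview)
Your proposal is correct and takes essentially the same approach as the paper: run CPSA on the SCEP+ strand space under the $\mathcal{W}$--$\mathcal{S}$ origination assumptions and observe that no shape contradicts the goal. The paper's proof is in fact even terser than yours---a single sentence reporting the absence of a contradicting shape---so your additional informal reasoning about the binding mechanism (why the signed hash containing $\omega$ and both tokens forces a matching regular $\mathcal{S}$ strand) is explanatory material the paper does not supply, but the underlying proof method is identical.
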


\begin{proof}[Proof by Enumeration]
There exists no shape that contradicts $\mathcal{W}-\mathcal{S}$ security.
\end{proof}

Our improvement, SCEP+, does not require any additional cryptographic calls and requires minimal changes.
SCEP+ achieves SCEP's goal of mutual authentication.
Software release 1.1.0 implements our recommended SCEP+ protocol, 
which change involved approximately five lines of code.



\section{Discussion}  
\label{sec:discussion}


Our study highlights that a secure system
requires more than sound abstract mathematical cryptography and UC proofs; a secure system
also needs careful attention to design, engineering, implementation, usability, 
key management, and procedures, and careful attention to many associated details.
Useful, clever, novel cryptography underlies the SecureDNA system~\cite{baum2024CIC}, but
to achieve its security goals, the system also depends on its
protocols, system design, and implementation.

The way the SecureDNA system deals with authenticated channels is problematic and 
inconsistent.  Communications are initiated with one-way TLS, not mTLS.
In addition to TLS, SecureDNA uses its own flawed custom mutual authentication protocol SCEP. 
If SecureDNA trusts TLS, then deploying an additional authentication protocol would
add unnecessary complexity.  
If SecureDNA does not trust TLS, then it should use a strong protocol 
that performs effective cryptographic bindings (see Section~\ref{ssec:recs}).

Our analysis of SecureDNA highlights several important security engineering principles that 
can serve as lessons learned for others.
(1)~Cryptographically bind protocol messages to their full context so that they cannot be misused out of context.
Almost all known structural weaknesses of protocols violate this essential principle.
(2)~Favor proven standard solutions over custom protocols. Carefully document and justify deviations from this principle.
(3)~When designing protocols, work throughout the process with experts in formal-methods analysis to
ensure that appropriate security goals are achieved.
(4)~Be aware that secure channels might fail at the PKI level for a variety of reasons, and where appropriate, 
add a second layer of defense at the application layer.


Although it would be impossible for us to know all of the rationale behind the SecureDNA team's design choices, 
we understand that their decision not to use mTLS was interwoven with their decision
not to use X.509. They explained that they avoided using X.509 because of its complexity,
failure to support threshold roots, and limited support in the Rust ecosystem.
In making this engineering tradeoff, they assumed greater risk in the subtle non-trivial task
of designing their own mutual authentication protocol.

SecureDNA relies heavily on \textit{security operations center (SOC)} software and procedures against rate-limiting attacks and other malicious behaviors. 
Although there is a role for such monitoring, it would be better where possible to strengthen the protocols rather than to rely on SOC as a first line of defense.
Even if auditing procedures are currently meticulously followed, there is no assurance that they will continue to be properly followed in the future. Experience shows that
effective monitoring is expensive and difficult to maintain over long periods
of time. There is no assurance that the SOC software is free of errors, and
all controls must be sensitive to differences among providers.
Although SecureDNA's SOC provides significant protection against our rate-limiting attacks,
better protocol design and engineering practices might reduce the risk of human error in these areas.

The SecureDNA protocols depend strongly on the confidentiality, authentication, and integrity properties of the TLS channel. 
In some settings, however, TLS channels can be defeated due
to poor management of root TLS certificates or
vulnerabilities with a corporate firewall that acts as a MitM TLS proxy to monitor
traffic~\cite{o2016tls}. 
SecureDNA cannot control the security practices of the providers.

The structural vulnerabilities we uncovered in the protocols might have been
avoided if formal-methods analysis had been used throughout the design and evaluation process.  
It would have been helpful to state security goals formally, 
to construct formal protocol models, and 
to perform formal-methods analyses on those models.
It would have been helpful to collaborate with experts who have the capability
to perform such formal-methods analyses.

Open problems include a security review of the software implementation,
including the auditing capabilities.
For example, the SecureDNA system deploys a web form through which {\myB} applies
for an ELT.  Vulnerabilities are very common in such forms, and a vulnerability 
in this form might enable attacks that circumvent {\myB}.
Given that SecureDNA depends critically on TLS, it would be prudent to perform a
thorough security review of its implementation and integration of TLS.

\section{Findings and Recommendations}  
\label{sec:findings-recs}

We summarize our major findings of vulnerabilities and offer recommendations for mitigating them.

\subsection{Findings} 
\label{ssec:findings}

(1)~The custom SCEP protocol achieves only one-way authentication.
This structural weakness enables the adversary 
to circumvent rate limits and mount DoS attacks,
if {\myS} connects with a malicious or corrupted {\myH} or {\myK}.

(2)~The system lacks adequate cryptographic bindings of 
certificates, tokens, and responses, including of responses from {\myH} to {\myS}. 
This undesirable structural weakness prevents the system from detecting
if responses, within a TLS channel, from the hazards database were modified.
If a synthesizer were to reconnect with {\myH} over the same TLS session 
(which the implementation disallows), the
adversary could replay and swap responses from {\myH} without breaking TLS.



(3)~Appendix~\ref{sec:other} points out several additional security issues, including
providers possibly storing passphrases in plain text files, and
depending in part on the assurance of email for auditing alerts and records. 
These examples illustrate difficult challenges stemming from 
engineering tradeoffs, limited options, and lack of control over provider actions.


(4)~From discussions with the SecureDNA team,
we learned that part of SecureDNA's security strategy depends
significantly, not on a model of prevention, but 
on a model of monitor, detect, and respond.
SecureDNA carries out this strategy using non-publicly released source code 
(unavailable to us) and a combination of automatic and manual interventions.  


\subsection{Recommendations} 
\label{ssec:recs}

To mitigate the security vulnerabilities we identified, we recommend
the following:

(1)~To mitigate the vulnerabilities from Section~\ref{sec:binding},
instead of using the custom authentication protocol SCEP, use mTLS~\cite{campbell2020rfc}. 
Given that SecureDNA provides certificate hierarchies, it would be relatively simple
to do so.  Similarly, instead of using a custom certificate structure, start with
a standard one such as X.509~\cite{X509} (see also Appendix~\ref{sec:other}).
In the alternative, as newly released Version 1.1.0 does,
replace SCEP with our suggested SCEP+, as explained in Section~\ref{sec:improvements}.

(2)~In each of the protocols, strengthen the cryptographic binding of messages to context. 
For example, bind each response from {\myH} to the associated query from {\myS}.
Also, bind messages and tokens to the channels 
in which they are used (e.g., see~\cite{golaszewski2024limitations,fidobindingSSR}).

(3)~Perform a thorough security review of SecureDNA's implementation and 
integration of TLS.

(4)~As is true for most systems, it is important to devote great attention
to detailed issues of system design and operations, including
those identified in Appendix~\ref{sec:other}.
For example, SecureDNA does recommend to providers that they store keys
appropriately, including in TPMs~\cite{tpm} when suitable, but 
SecureDNA cannot control such practices.

\section{Conclusion}  
\label{sec:conclusion}

Our main findings are the two structural weaknesses (one-way authentication
and inadequate cryptographic bindings) summarized in Section~\ref{ssec:findings}.
Even if these attacks pose low risks, 
including due to monitoring,
eliminating the underlying structural weaknesses would strengthen the system.
Other attacks might be possible.

SG2 (secrecy of the query) appears solidly protected by the blinding 
mechanisms, and this goal does not depend on the strength of the TLS implementation and integration. The other security goals depend critically on TLS, and the security engineering
devoted to protecting them is less convincing.

We prove that our improved SCEP+ satisfies our precisely-stated
security goals (including mutual authentication) in the DY model.
Release 1.1.0 implements our suggestion.

Our study demonstrates that secure
systems require more than sound abstract mathematical cryptography.
Our study also highlights tensions between security and fielding
a usable system that providers will adopt.
Although a malicious entity could avoid controls by using their own synthesis machine,
or using a synthesis provider that does not screen, 
the SecureDNA system can meaningfully raise the safety of legitimate synthesis labs.
We hope that our analysis and suggested mitigations will strengthen the SecureDNA system.

\section*{Acknowledgments} 

This work builds in part on three student projects at UMBC:
two~\cite{491f2024-1,491f2024-4}
from Sherman's fall 2024 INSuRE cybersecurity research course~\cite{sherman_insure_2017,insure}
and one~\cite{652s2025} from Sherman's cryptology class.
Sherman was supported in part by
the National Science Foundation under
DGE grants 1753681 (SFS) and 2138921 (SaTC).
Sherman, Golaszewski, and Romano were supported in 2024--2025 by
the UMBC cybersecurity exploratory grant program.
Fuchs was supported in 2024--2025 by a UMBC cybersecurity graduate fellowship.
We thank Leonard Foner of the SecureDNA team for helpful discussions,
and we thank Kathleen Romanik for editorial suggestions.



\bibliography{references}
\bibliographystyle{ieeetr}


\appendix

\subsection{Ethical Considerations}

We have responsibly disclosed our findings to the SecureDNA team by sharing preliminary drafts of our paper, exchanging emails, and meeting with them remotely. 
These meetings took place on June 3, 2025, and July 30, 2025.  
Newly released Version 1.1.0 of the SecureDNA system fixes SCEP with our proposed SCEP+ protocol.
It also implements an optional ``verifiable screening'' mode, which can mitigate the response-switching attack.



\clearpage
\subsection{Formal-Methods Analysis of Basic Query}
\label{append:basic}

Building on the analysis of SCEP, we define a strand space for the SecureDNA basic query protocol, formalize a subset of the protocol's informal security goals (Section~\ref{ssec:goals}), and prove theorems that incorporate the formal goals.


\subsubsection{Query Protocol Strand Space Model} 
\label{ssec:Q-models}

As we did for the SCEP protocol, we formalize the behavior of the DY adversary and honest communicants as penetrator and regular strands.
Definition~\ref{def:pen-strands} defines the set of penetrator strands.
We formalize a subset of the query protocol security goals as logical formulae on execution models in the query strand space.

To define regular strands that carry out the roles of  {\myS},  {\myK}, and {\myH}, we first define traces for each of these roles.
We define two complementary traces alongside 
the {\myS} trace (Definition~\ref{def:query-s-trace}): {\myK} trace (Definition~\ref{def:query-k-trace}), and keyed-hash {\myH} trace (Definition~\ref{def:query-h-trace}).
These regular traces incorporate traces of SCEP (Definition~\ref{def:scep-traces}).


\begin{definition}
\label{def:query-terms}
[Query Protocol Terms]
Let $\vec{Q}$ be a list of terms:
\begin{enumerate}
    \item $\mathcal{S}, \mathcal{K}, \mathcal{H}, M, CA \in Name$
    \item $r_\mathcal{S}, r'_\mathcal{S}, r''_\mathcal{S},r'''_\mathcal{S} \in Nonce$
    \item $r_\mathcal{K}, r_\mathcal{K}', r_\mathcal{S}, r_\mathcal{S}' \in Nonce$
    \item $s, \beta, k, e_\mathcal{S}, e_\mathcal{S}', e_\mathcal{K}, e_\mathcal{H} \in Expt$
    \item $M(s): g^s$
    \item $\mathcal{SK}_{WRITE}: h(g^{e_\mathcal{S} \times e_\mathcal{K}}||r_\mathcal{S}'||r_\mathcal{K}', \text{``client-write"})$
    \item $\mathcal{SH}_{WRITE}: h(g^{e_\mathcal{S}' \times e_\mathcal{H}}||r_\mathcal{S}'''||r_\mathcal{H}', \text{``client-write"})$
    \item $K_{WRITE}: h(g^{e_\mathcal{S} \times e_\mathcal{K}}||r_\mathcal{S}'||r_\mathcal{K}', \text{``server-write"})$
    \item $H_{WRITE}: h(g^{e_\mathcal{S}' \times e_\mathcal{H}}||r_\mathcal{S}'''||r_\mathcal{H}', \text{``server-write"})$.
\end{enumerate}
\end{definition}

\begin{definition}[Query Protocol Synthesizer Trace]
\label{def:query-s-trace}

Let trace $Tr_{QUERY-\mathcal{S}}(\vec{Q})$ comprise the following sequence of events:
\begin{enumerate}
    \item $Tr_{SCEP-\mathcal{SR}}(\\\mathcal{S}, \mathcal{K}, M, CA, r_\mathcal{S}, r_\mathcal{K},r_\mathcal{S}', r_\mathcal{K}', \omega_\mathcal{K}, e_\mathcal{S}, e_\mathcal{K})$
    \item $+e(\omega_\mathcal{K}|| M(s)^\beta,\mathcal{SK}_{WRITE})$
    \item $-e(M(s)^{\beta k}, \mathcal{K}_{WRITE})$
    \item $Tr_{SCEP-\mathcal{SR}}(\\\mathcal{S}, \mathcal{H}, M, CA, r_\mathcal{S}'', r_\mathcal{H}, r_\mathcal{S}''', r_\mathcal{H}', \omega_\mathcal{H}, e_\mathcal{S}', e_\mathcal{H})$
    \item $+e(\omega_\mathcal{H}, M(s)^k, \mathcal{SH}_{WRITE})$
    \item $-e(Resp, \mathcal{H}_{WRITE})$.
\end{enumerate}

\end{definition}

\begin{definition}
[Query Protocol Keyserver Trace]
\label{def:query-k-trace}

Let trace $Tr_{QUERY-\mathcal{K}}(\vec{Q})$ comprise the following sequence of events:
\begin{enumerate}
    \item $Tr_{SCEP-\mathcal{RS}}(\\\mathcal{S}, \mathcal{K}, M, CA, r_\mathcal{S}, r_\mathcal{K},r_\mathcal{S}', r_\mathcal{K}', \omega_\mathcal{K}, e_\mathcal{S}, e_\mathcal{K})$
    \item $-e(\omega_\mathcal{K}|| M(s)^\beta,\mathcal{SK}_{WRITE})$
    \item $+e(M(s)^{\beta k}, \mathcal{K}_{WRITE})$.
\end{enumerate}

\end{definition}

\begin{definition}
[Query Protocol Keyed-Hash Database Trace]
\label{def:query-h-trace}
Let trace $Tr_{QUERY-\mathcal{H}}(\vec{Q})$ comprise the following sequence of events:
\begin{enumerate}
    \item $Tr_{SCEP-\mathcal{RS}}(\\\mathcal{S}, \mathcal{H}, M, CA, r_\mathcal{S}'', r_\mathcal{H}, r_\mathcal{S}''', r_\mathcal{H}', \omega_\mathcal{H}, e_\mathcal{S}', e_\mathcal{H})$
    \item $-e(\omega_\mathcal{H}, M(s)^k, \mathcal{SH}_{WRITE})$
    \item $+e(Resp, \mathcal{H}_{WRITE})$.
\end{enumerate}

\end{definition}

To define the query protocol strand space (Definition~\ref{def:query-strandspace}), we specify a union of the regular query protocol strands (Definition~\ref{def:query-protocol-strands}) and the penetrator strands.

\begin{definition}
[Query Protocol Strands]
\label{def:query-protocol-strands}
We define three sets of regular strands that carry out the query protocol.
\begin{enumerate}
    \item Let $\Sigma_{QUERY-\mathcal{S}}$ be the set of all strands with the trace of the form $Tr_{QUERY-\mathcal{S}}$.
    \item Let $\Sigma_{QUERY-\mathcal{K}}$ be the set of all strands with the trace of the form $Tr_{QUERY-\mathcal{K}}$.
    \item Let $\Sigma_{QUERY-\mathcal{H}}$ be the set of all strands with the trace $Tr_{QUERY-\mathcal{H}}$.
\end{enumerate}
\end{definition}

\begin{definition}[Query Protocol Strand Space]
\label{def:query-strandspace}
The query protocol strand space comprises $\Sigma_{pen} \cup \Sigma_{QUERY-\mathcal{S}} \cup \Sigma_{QUERY-\mathcal{K}} \cup \Sigma_{QUERY-\mathcal{H}}$.
\end{definition}

\subsubsection{Security Goals} 
\label{ssec:Q-goals}

Using our formal definitions of confidentiality and agreement (Section~\ref{ssec:scep-security-goals}), we formalize a subset (SG2, SG3) of the informal security goals from Section~\ref{ssec:goals}.
As we do in our SCEP analysis, we assume that the adversary does not possess initial knowledge of honest communicant secret keys or TLS DH exponents, including the CA secret key.
Additionally, each security goal makes freshness assumptions appropriate to the cryptographic perspective of the goal.
In our analysis, we combine the roles {\myS} and {\myC} into the role {\myS} because communication between these entities falls outside of our scope.
The informal security goals SG2 and SG3 hold iff the formal security goals hold for all possible execution models under the corresponding formal assumptions.

Because the informal SG2 asserts that no party other than {\myS} and {\myC} learn the synthesis order request, we formalize SG2 as a confidentiality goal on the sequence $M(s)$.
We formulate formal SG2 security goals from the perspective of each formal protocol role: {\myS} (Definition~\ref{def:SG2-synthclient}), {\myK} (Definition~\ref{def:SG2-keyserver}), and {\myH} (Definition~\ref{def:SG2-database}).
Together, these definitions assert that, for a given role trace $Tr$, a strand with 
trace $Tr$ under the origination assumptions corresponding to the role, 
does not leak $M(s)$ to the adversary.
Should the conjunction of the three definitions holds, we prove that $M(s)$ cannot leak to the adversary in any execution model resulting from the query strand space.

\begin{definition}[SG2: Synthesizer Perspective]
\label{def:SG2-synthclient}
Let \textbf{orig} be the conjunction of the following origination assumptions on $\vec{Q}$:
\begin{enumerate}
    \item $Non(sk(\mathcal{S}))$, $Non(sk(\mathcal{K}))$, $Non(sk(\mathcal{H}))$, $Non(sk(M))$, $Non(sk(CA))$, where $Non$ indicates non-origination, and
    \item $Uniq(\omega_\mathcal{K})$, $Uniq(\omega_\mathcal{H})$, $Uniq(r_\mathcal{S})$, $Uniq(r'_\mathcal{S})$, $Uniq(r''_\mathcal{S})$, $Uniq(r'''_\mathcal{S})$, $Uniq(e_\mathcal{S})$, $Uniq(e_\mathcal{K}))$, $Uniq(e_\mathcal{H}))$, $Uniq(s)$, $Uniq(\beta)$ where $Uniq$ indicates unique origination.
\end{enumerate}
A query protocol strand space $\mathcal{P}$ satisfies SG2 from the synthesizer perspective iff $Conf_\mathcal{P}(Tr_{QUERY-\mathcal{S}}(\vec{Q}), M(s), \mathbf{orig}$) is true.
\end{definition}

\begin{definition}[SG2: Keyserver Perspective]
\label{def:SG2-keyserver}
Let \textbf{orig} be the conjunction of the following origination assumptions on $\vec{Q}$:
\begin{enumerate}
    \item $Non(sk(\mathcal{S}))$, $Non(sk(\mathcal{K}))$, $Non(sk(\mathcal{H}))$, $Non(sk(M))$, $Non(sk(CA))$, where $Non$ indicates non-origination, and
    \item $Uniq(\omega_\mathcal{K})$, $Uniq(\omega_\mathcal{H})$, $Uniq(r_\mathcal{K})$, $Uniq(r'_\mathcal{K})$, $Uniq(e_\mathcal{S})$, $Uniq(e_\mathcal{K}))$, $Uniq(e_\mathcal{H}))$, $Uniq(s)$, $Uniq(\beta)$ where $Uniq$ indicates unique origination.
\end{enumerate}
A query protocol strand space $\mathcal{P}$ satisfies SG2 from the keyserver perspective iff $Conf_\mathcal{P}(Tr_{QUERY-\mathcal{K}}(\vec{Q}), M(s), \mathbf{orig}$) is true.
\end{definition}

\begin{definition}[SG2: Keyed-Hash Database Perspective]
\label{def:SG2-database}
Let \textbf{orig} be the conjunction of the following origination assumptions on $\vec{Q}$:
\begin{enumerate}
    \item $Non(sk(\mathcal{S}))$, $Non(sk(\mathcal{K}))$, $Non(sk(\mathcal{H}))$, $Non(sk(M))$, $Non(sk(CA))$, where $Non$ indicates non-origination, and
    \item $Uniq(\omega_\mathcal{K})$, $Uniq(\omega_\mathcal{H})$, $Uniq(r_\mathcal{H})$, $Uniq(r'_\mathcal{H})$, $Uniq(e_\mathcal{S})$, $Uniq(e_\mathcal{H}))$, $Uniq(e_\mathcal{H}))$, $Uniq(s)$, $Uniq(\beta)$ where $Uniq$ indicates unique origination.
\end{enumerate}
A query protocol strand space $\mathcal{P}$ satisfies SG2 from the keyserver perspective iff $Conf_\mathcal{P}(Tr_{QUERY-\mathcal{H}}(\vec{Q}), M(s), \mathbf{orig}$) is true.
\end{definition}

Next we formalize SG3. Informally, SG3 asserts that {\myS} receives a valid answer as to whether a synthesis order request is in the database.
We formalize SG3 as a pair of agreement goals in which (1)~strands of role {\myS} must agree on the encrypted sequence $M(s)^k$ and the query response $Resp$ with a regular strand of role {\myH}, and (2)~strands of role {\myH} must agree on $M(s)^k, Resp$ with a regular strand of {\myS}.
If this pair of goals (Definition~\ref{def:SG3-synthclient}, Definition~\ref{def:SG3-database}) holds true for all execution models in our strand space, 
we prove {injective agreement}~\cite{lowe_authentication}
for $M(s)^k, Resp$ between {\myS} and {\myH}.
Because $Resp$ must originate on a regular {\myH} strand that faithfully follows the protocol, informal SG3 security goal holds iff
there is injective agreement on $M(s)^k, Resp$.

\begin{definition}[SG3: Synthclient Perspective]
\label{def:SG3-synthclient}
Let \textbf{orig} be the conjunction of the following origination assumptions on $\vec{Q}$:
\begin{enumerate}
    \item $Non(sk(\mathcal{S}))$, $Non(sk(\mathcal{K}))$, $Non(sk(\mathcal{H}))$, $Non(sk(M))$, $Non(sk(CA))$, where $Non$ indicates non-origination, and
    \item $Uniq(\omega_\mathcal{K})$, $Uniq(\omega_\mathcal{H})$, $Uniq(r_\mathcal{S})$, $Uniq(r'_\mathcal{S})$, $Uniq(r''_\mathcal{S})$, $Uniq(r'''_\mathcal{S})$, $Uniq(e_\mathcal{S})$, $Uniq(e_\mathcal{K}))$, $Uniq(e_\mathcal{H}))$, $Uniq(s)$, $Uniq(\beta)$ where $Uniq$ indicates unique origination.
\end{enumerate}
A query protocol strand space $\mathcal{P}$ satisfies SG3 from the synthesizer perspective iff $Agree_\mathcal{P}(Tr_{QUERY-\mathcal{S}}(\vec{Q}), Tr_{QUERY-\mathcal{H}}(\vec{Q})$,\linebreak $[M(s)^k,Resp], \mathbf{orig})$ is true.
\end{definition}

\begin{definition}[SG3: Keyed-Hash Database Perspective]
\label{def:SG3-database}
Let \textbf{orig} be the conjunction of the following origination assumptions on $\vec{Q}$:
\begin{enumerate}
    \item $Non(sk(\mathcal{S}))$, $Non(sk(\mathcal{K}))$, $Non(sk(\mathcal{H}))$, $Non(sk(M))$, $Non(sk(CA))$, where $Non$ indicates non-origination, and
    \item $Uniq(\omega_\mathcal{K})$, $Uniq(\omega_\mathcal{H})$, $Uniq(r_\mathcal{S})$, $Uniq(r'_\mathcal{S})$, $Uniq(r''_\mathcal{S})$, $Uniq(r'''_\mathcal{S})$, $Uniq(e_\mathcal{S})$, $Uniq(e_\mathcal{K}))$, $Uniq(e_\mathcal{H}))$, $Uniq(s)$, $Uniq(\beta)$ where $Uniq$ indicates unique origination.
\end{enumerate}
A query protocol strand space $\mathcal{P}$ satisfies SG3 from the database perspective iff $Agree_\mathcal{P}(Tr_{QUERY-\mathcal{H}}(\vec{Q}), Tr_{QUERY-\mathcal{S}}(\vec{Q})$, \linebreak $[M(s)^k,Resp], \mathbf{orig})$ is true.
\end{definition}

\subsubsection{Analysis} 
\label{ssec:Q-analysis}

We use CPSA to describe all unique shapes possible within a strand space under the origination assumptions of each query protocol security goal.
Theorem~\ref{thrm:query-conf} asserts the confidentiality of the client sequence $M(s)$.
Theorem~\ref{thrm:query-agree} asserts shows that {\myH} fails to agree on the critical values $M(s)$ and the query response $Resp$ with a legitimate synthesizer {\myS}.
This failure results from the SCEP's failure to mutually authenticate (Section~\ref{ssec:scep-analysis}).

\begin{theorem}
[Confidentiality of $M(s)$]
\label{thrm:query-conf}
If $\mathcal{P}$ is any query protocol strand space 
(Definition~\ref{def:query-strandspace}),
then $\mathcal{P}$ satisfies SG2 from each role's perspective. (Definition~\ref{def:SG2-synthclient}, Definition~\ref{def:SG2-keyserver},
Definition~\ref{def:SG2-database})
\end{theorem}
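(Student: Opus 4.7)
The plan is to prove Theorem~\ref{thrm:query-conf} by instantiating the general confidentiality template (Definition~\ref{def:goal_conf}) three times---once for each role---and appealing to CPSA shape enumeration in exactly the manner used for Theorems~\ref{thrm:scep-s-w} and~\ref{thrm:scep-plus-s-w}. For each perspective $X \in \{\mathcal{S},\mathcal{K},\mathcal{H}\}$, I would take the origination assumptions specified in Definitions~\ref{def:SG2-synthclient}--\ref{def:SG2-database}, adjoin a listener strand $Tr_\mathcal{L}(M(s))$, and run CPSA to completion. The claim is that in every case CPSA returns the empty set of shapes, which by the definition of $Conf_\mathcal{P}$ establishes confidentiality of $M(s)$.

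The intuitive justification CPSA is expected to mechanize is the following. The term $M(s)=g^s$ is never transmitted in the clear: the synthesizer trace (Definition~\ref{def:query-s-trace}) only emits $M(s)^\beta$ under the TLS write-key $\mathcal{SK}_{WRITE}$ and $M(s)^k$ under $\mathcal{SH}_{WRITE}$, while the keyserver and database traces emit at most $M(s)^{\beta k}$ under their server write-keys. Because the TLS DH exponents $e_\mathcal{S},e_\mathcal{S}',e_\mathcal{K},e_\mathcal{H}$ are uniquely originating and the signing keys $sk(\mathcal{S}),sk(\mathcal{K}),sk(\mathcal{H}),sk(CA)$ are non-originating, the adversary cannot derive any of $\mathcal{SK}_{WRITE},K_{WRITE},\mathcal{SH}_{WRITE},H_{WRITE}$ in any shape, so those ciphertexts are opaque. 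The remaining algebraic worry is whether the adversary can strip the blinding $\beta$ from $M(s)^\beta$ or the key $k$ from $M(s)^k$; since $\beta$ is uniquely originating and never transmitted outside an encryption under a write-key the adversary cannot read, and since $k$ likewise never appears on the network, the CPSA exponent algebra treats $M(s)^\beta$ and $M(s)^k$ as non-invertible with respect to recovering $g^s$.

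The three perspectives differ only in which fresh values are assumed uniquely originating, so the three CPSA runs produce structurally similar search trees rooted at a listener strand. For the {\myS} perspective the listener can only hope to receive $M(s)$ from messages that the synthesizer sends; each such message is wrapped in a TLS encryption that the enumerator finds unopenable. For the {\myK} and {\myH} perspectives, the listener depends on values the role itself handles ($M(s)^{\beta k}$ or $M(s)^k$) rather than $M(s)$ directly, and the origination constraints on the role-local freshness ($e_\mathcal{K}$, $e_\mathcal{H}$, etc.) are enough for CPSA to rule out derivation of $M(s)$ even though it does not constrain $\beta$ or $s$ as unique in those runs---because the synthesizer that blinded the sequence is not present as a regular strand from these perspectives, the adversary never gets a handle on $\beta$.

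I expect the main obstacle to be not the high-level argument but the faithful modeling of the blinding operation in CPSA's term algebra. CPSA must treat $g^{s\cdot\beta}$ and $g^{s\cdot k}$ as atomic with respect to solving for $g^s$ (a Decisional-DH style assumption baked into the symbolic model), and the listener-strand derivation rules must not allow the penetrator to introduce a strand that computes $(g^{s\beta})^{\beta^{-1}}$ from knowledge of $g^{s\beta}$ alone. Provided the strand space is written using CPSA's standard exponential sort---so that exponents compose multiplicatively but cannot be inverted without an explicit operator strand for $\beta^{-1}$---the three enumerations should each terminate with zero shapes, and the conjunction of the three resulting $Conf_\mathcal{P}$ facts yields the theorem.
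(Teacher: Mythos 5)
Your proposal matches the paper's argument: the theorem is proved by enumeration, running CPSA under each role's origination assumptions with a listener strand for $M(s)$ and observing that no resulting shape contradicts the confidentiality goal, which is exactly what you describe (your added intuition about the TLS write-keys and the exponent algebra is consistent with, though more detailed than, the paper's one-line justification). The only cosmetic difference is that CPSA in practice reports unrealizable (``dead'') listener skeletons rather than literally an empty set of shapes, but this does not change the argument.
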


\begin{proof}[Proof by Enumeration]
After successfully terminating, CPSA identifies all unique shapes corresponding to each security goal.
For each security goal, there exists no contradicting shape.
Thus, $M(s)$ remains confidential from each perspective.
\end{proof}

\begin{theorem}
[Agreement on $M(s)^k, Resp$]
\label{thrm:query-agree}
If $\mathcal{P}$ is any query protocol strand space 
(Definition~\ref{def:query-strandspace}),
then $\mathcal{P}$ does \textbf{not} satisfy SG3 from the keyed-hash database perspectives.
(Definition~\ref{def:SG3-database})
\end{theorem}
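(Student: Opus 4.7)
The plan is to prove Theorem~\ref{thrm:query-agree} by counterexample, leveraging the already-established failure of SCEP $\mathcal{W}$-$\mathcal{S}$ security (Theorem~\ref{thrm:scep-w-s}) as the seed of the attack. Since the $\mathcal{S}$--$\mathcal{H}$ segment of the query protocol strand space is built by composing $Tr_{SCEP-\mathcal{RS}}$ into $Tr_{QUERY-\mathcal{H}}$, and SCEP provides $\mathcal{H}$ with no cryptographic evidence of its peer's identity, I expect CPSA to produce at least one shape in which a regular $Tr_{QUERY-\mathcal{H}}$ strand runs to completion while no regular $Tr_{QUERY-\mathcal{S}}$ strand exists that agrees with it on $[M(s)^k, Resp]$ together with the SCEP-side context.

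First, I would instantiate the CPSA model for the query strand space $\mathcal{P}$ of Definition~\ref{def:query-strandspace} using the origination assumptions of Definition~\ref{def:SG3-database}: non-origination of all relevant private keys and unique origination of the nonces, cookies, DH exponents, and $\beta$. The target query is whether, whenever a regular strand $r$ has trace $Tr_{QUERY-\mathcal{H}}(\vec{Q})$ terminating with $+e(Resp,\mathcal{H}_{WRITE})$, there must exist a regular strand $s$ with trace $Tr_{QUERY-\mathcal{S}}(\vec{Q})$ agreeing on $[M(s)^k, Resp]$. Second, I would let CPSA enumerate all minimal shapes compatible with a completed $\mathcal{H}$ strand under these assumptions.

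The expected counterexample shape is essentially a lifting of the SCEP MitM attack into the query setting: a penetrator $P$ opens a one-way authenticated TLS session directly with $\mathcal{H}$, playing the role of the client throughout $Tr_{SCEP-\mathcal{RS}}$ and then supplying an arbitrary $M(s)^k$ (or a value it obtained by replaying a legitimate $\mathcal{S}$'s output, or by itself colluding with a corrupt $\mathcal{K}'$) inside the SCEP-protected channel. Since Messages~(3) and (4) of SCEP bind neither $\omega$ nor the peer token, $\mathcal{H}$ completes authentication and returns $Resp$, yet no regular $Tr_{QUERY-\mathcal{S}}$ strand need exist at all --- or if one exists, it need not share the SCEP session parameters that $\mathcal{H}$ saw. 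Thus CPSA should exhibit a shape containing only the penetrator strands of $\Sigma_{pen}$ plus the single $\mathcal{H}$ strand, contradicting the $Agree_\mathcal{P}$ predicate of Definition~\ref{def:SG3-database}.

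The main obstacle I anticipate is modeling fidelity rather than the logical core. First, the DH exponents $\beta$ and $k$ mean CPSA must handle group-exponent terms carefully so that the blinded value $M(s)^\beta$ and the keyed value $M(s)^k$ are recognized as penetrator-producible once the attacker has plausibly obtained them through its parallel sessions; I would rely on CPSA's standard Diffie-Hellman algebra to close this. Second, I must ensure that the regular $\mathcal{H}$ strand in the counterexample is not accidentally forced by my origination assumptions to coexist with a matching $\mathcal{S}$ strand --- in particular, $Resp$ must not be declared uniquely originating on $\mathcal{S}$, and $M(s)^k$ must be reachable to the penetrator under some bundle consistent with our assumptions. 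Once these modeling details are set correctly, the conclusion should follow directly: CPSA returns a shape in which $\mathcal{H}$ has no regular $\mathcal{S}$-partner on $[M(s)^k, Resp]$, disproving SG3 from the database perspective and thereby establishing Theorem~\ref{thrm:query-agree}.
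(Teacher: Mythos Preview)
Your proposal is correct and takes essentially the same approach as the paper: both proceed by counterexample, running CPSA on the query strand space under the origination assumptions of Definition~\ref{def:SG3-database} and exhibiting shapes in which a completed $Tr_{QUERY-\mathcal{H}}$ strand has no matching regular $Tr_{QUERY-\mathcal{S}}$ strand agreeing on $M(s)^k, Resp$, with the root cause being SCEP's failure to authenticate $\mathcal{S}$ to $\mathcal{H}$. Your write-up is considerably more explicit about the attack mechanics and the CPSA modeling concerns than the paper's terse proof, but the underlying argument is the same.
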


\begin{proof}[Proof by Counterexample]
After terminating, CPSA enumerates several unique shapes under the assumptions of Definition~\ref{def:SG3-database}, each of which contradict the security goal (no agreement on $M(s), Resp$) because {\myH} is unable to authenticate a legitimate instance of {\myS}.
\end{proof}

\clearpage 
\subsection{Formal-Methods Analysis of Exemption Query}
\label{append:exemption}

The SecureDNA exemption-handling protocol builds on the query protocol by adding the exemption-list token and an authenticator device.
The exemption protocol replicates the weakness of the query protocol because it also relies on the flawed SCEP to authenticate {\myS} to infrastructure servers {\myK} and {\myH}.


\subsubsection{Models} 
\label{ssec:E-models}

To define the exemption protocol traces, we build on existing query protocol traces by incorporating additional terms of the exemption protocol.
First, we extend the query protocol terms (Definition~\ref{def:query-terms}) to include the exemption protocol terms.
Second, we extend the query protocol traces to include these new terms.
Third, we extend the query protocol security goals (Appenidx~\ref{ssec:E-goals}).
Finally, we define and prove security theorems for the exemption protocol (Appendix~\ref{ssec:E-analysis}).

To extend the query protocol terms, we define an exemption sequence $M(s_E)$, where $s_E$ is the client's exemption sequence, and $AuthCode$ is a nonce that models an assertion by the authentication backend {\myA}.
Because we do not analyze the DOPRF, we model the exemption sequence as a plain DH exponentiation, as we do for the client sequence $M(s)$.
Additionally, we introduce names for {\myB} that back the exemption token $T_{\mathcal{C, B}}^E$ of  {\myC}.
The synthesizer includes $T_{\mathcal{C, B}}^E$ when making requests to {\myH}.

\begin{definition}
[Exemption Protocol Terms]
\label{def:query-terms2}
Let $\vec{E}$ be a list of terms that includes the terms of $\vec{Q}$ (Definition~\ref{def:query-terms}) and the terms below:
\begin{enumerate}
    \item $\mathcal{B, C} \in Name$
    \item $s_E \in Expt$
    \item $M(s_E): g^{s_E}$
    \item $AuthCode \in Nonce$.
\end{enumerate}
\end{definition}

Next, we define traces for the exemption protocol.
In our definitions, we specify messages only that are \textit{distinct} from those in the basic query protocol.

\begin{definition}[Exemption Protocol Synthesizer Trace]
\label{def:exempt-s-trace}

Let trace $Tr_{EXEMPT-\mathcal{S}}(\vec{E})$ comprise the following sequence of events:
\begin{enumerate}
    \item (Definition~\ref{def:query-s-trace})
    \item $+e(\omega_\mathcal{K}|| M(s)^\beta||M(s_E)^\beta,\ \mathcal{SK}_{WRITE})$
    \item $-e(M(s)^{\beta k}||M(s_E)^{\beta k}, \ \mathcal{K}_{WRITE})$
    \item (Definition~\ref{def:query-s-trace})
    \item $+e(\omega_\mathcal{H}, M(s)^k||T^E_{\mathcal{C, B}}||AuthCode||M(s_E)^k, \\ \mathcal{SH}_{WRITE})$
    \item (Definition~\ref{def:query-s-trace}).
\end{enumerate}

\end{definition}

\begin{definition}
[Exemption Protocol Keyserver Trace]
\label{def:exempt-k-trace}

Let trace $Tr_{EXEMPT-\mathcal{K}}(\vec{E})$ comprise the following sequence of events:
\begin{enumerate}
    \item (Definition~\ref{def:query-k-trace})
    \item $-e(\omega_\mathcal{K}|| M(s)^\beta||M(s_E)^\beta,\mathcal{SK}_{WRITE})$
    \item $+e(M(s)^{\beta k}||M(s_E)^{\beta k}, \ \mathcal{K}_{WRITE})$.
\end{enumerate}

\end{definition}

\begin{definition}
[Exemption Protocol Keyed-Hash Database Trace]
\label{def:exempt-h-trace}
Let trace $Tr_{EXEMPT-\mathcal{H}}(\vec{E})$ comprise the following sequence of events:
\begin{enumerate}
    \item (Definition~\ref{def:query-h-trace})
    \item $-e(\omega_\mathcal{H}, M(s)^k||T^E_{\mathcal{C, B}}||AuthCode||M(s_E)^k, \\ \mathcal{SH}_{WRITE})$
    \item (Definition~\ref{def:query-h-trace}).
\end{enumerate}
\end{definition}

\begin{definition}[Exemption Protocol Strand Space]
\label{def:query-strandspaceX}
The exemption protocol strand space comprises $\Sigma_{pen} \cup \Sigma_{EXEMPT-\mathcal{S}} \cup \Sigma_{EXEMPT-\mathcal{K}} \cup \Sigma_{EXEMPT-\mathcal{H}}$,
where $\Sigma_{EXEMPT-\mathcal{S}}$, $\Sigma_{EXEMPT-\mathcal{K}}$, and $\Sigma_{EXEMPT-\mathcal{H}}$ are sets of all strands with the respective traces $Tr_{EXEMPT-\mathcal{S}}$, $Tr_{EXEMPT-\mathcal{K}}$, and $Tr_{EXEMPT-\mathcal{H}}$.
\end{definition}

\subsubsection{Security Goals} 
\label{ssec:E-goals}
Because the exemption protocol has the same overall objective as the basic query protocol, we modify the security goals in Appendix~\ref{ssec:Q-goals} to incorporate traces and additional assumptions specific to the exemption protocol.
Key additions include agreement on the exemption sequence $M(s_E)$ and the authentication code $AuthCode$.
As we do in the query protocol, we assert the confidentiality of the client sequence $M(s)$.

\begin{definition}[Exemption SG2: Synthesizer Perspective]
\label{def:SG2-exempt-synthclient}
Let \textbf{exempt-orig} be the conjunction of \textbf{orig} in Definition~\ref{def:SG2-synthclient} and the following origination assumptions on $\vec{E}$:
\begin{enumerate}
    \item $Non(sk(\mathcal{B}))$, $Non(sk(\mathcal{C}))$
    \item $Uniq(S_E)$
    \item $PenNon(AuthCode)$, where $PenNon$ indicates {penetrator non-origination}.
\end{enumerate}
\textit{Penetrator non-originating} terms such as authentication codes or passwords are unknown and unguessable to the adversary, but known to all regular strands. 

An exemption protocol strand space $\mathcal{P}$ satisfies SG2 from the synthesizer perspective iff $Conf_\mathcal{P}(Tr_{EXEMPT-\mathcal{S}}(\vec{E}), M(s), \mathbf{exempt-orig}$) is true.
\end{definition}

\begin{definition}[Exemption SG2: Keyserver Perspective]
\label{def:SG2-exempt-keyserver}
Let \textbf{exempt-orig} be the conjunction of \textbf{orig} in Definition~\ref{def:SG2-keyserver} and the following origination assumptions on $\vec{E}$:
\begin{enumerate}
    \item $Non(sk(\mathcal{B}))$, $Non(sk(\mathcal{C}))$
    \item $Uniq(S_E)$
    \item $PenNon(AuthCode)$, where $PenNon$ indicates penetrator non-origination.
\end{enumerate}

An exemption protocol strand space $\mathcal{P}$ satisfies SG2 from the keyserver perspective iff $Conf_\mathcal{P}(Tr_{EXEMPT-\mathcal{K}}(\vec{E}), M(s), \mathbf{exempt-orig}$) is true.
\end{definition}

\begin{definition}[Exemption SG2: Keyed-Hash Database Perspective]
\label{def:SG2-exempt-database}
Let \textbf{exempt-orig} be the conjunction of \textbf{orig} in Definition~\ref{def:SG2-database} and the following origination assumptions on $\vec{E}$:
\begin{enumerate}
    \item $Non(sk(\mathcal{B}))$, $Non(sk(\mathcal{C}))$
    \item $Uniq(S_E)$
    \item $PenNon(AuthCode)$, where $PenNon$ indicates penetrator non-origination.
\end{enumerate}

A query protocol strand space $\mathcal{P}$ satisfies SG2 from the keyserver perspective iff $Conf_\mathcal{P}(Tr_{EXEMPT-\mathcal{H}}(\vec{E}), M(s), \mathbf{exempt-orig}$) is true.
\end{definition}

For the exemption protocol, we extend the agreement criteria of Definition~\ref{def:SG3-synthclient} and Definition~\ref{def:SG3-database} to include the exempt sequence $M(s^E)$ and the authentication code.
Because the exemption protocol also builds on the flawed SCEP, there are similar issues when attempting to prove agreement from {\myH}'s perspective.

\begin{definition}[Exemption SG3: Synthclient Perspective]
\label{def:SG3-exempt-synthclient}
Let \textbf{exempt-orig} be the conjunction of \textbf{orig} in Definition~\ref{def:SG3-synthclient} and the following origination assumptions on $\vec{E}$:
\begin{enumerate}
    \item $Non(sk(\mathcal{B}))$, $Non(sk(\mathcal{C}))$
    \item $Uniq(S_E)$
    \item $PenNon(AuthCode)$, where $PenNon$ indicates penetrator non-origination.
\end{enumerate}
An exemption protocol strand space $\mathcal{P}$ satisfies SG3 from the synthesizer perspective iff $Agree_\mathcal{P}(Tr_{EXEMPT-\mathcal{S}}(\vec{E}), Tr_{EXEMPT-\mathcal{H}}(\vec{E})$,\linebreak $[M(s)^k,M(s^E)^k, AuthCode, Resp], \mathbf{exempt-orig})$ is true.
\end{definition}

\begin{definition}[Exemption SG3: Keyed-Hash Database Perspective]
\label{def:SG3-exempt-database}
Let \textbf{exempt-orig} be the conjunction of \textbf{orig} in Definition~\ref{def:SG3-database} and the following origination assumptions on $\vec{E}$:
\begin{enumerate}
    \item $Non(sk(\mathcal{B}))$, $Non(sk(\mathcal{C}))$
    \item $Uniq(S_E)$
    \item $PenNon(AuthCode)$, where $PenNon$ indicates penetrator non-origination.
\end{enumerate}
An exemption protocol strand space $\mathcal{P}$ satisfies SG3 from the database perspective iff $Agree_\mathcal{P}(Tr_{EXEMPT-\mathcal{H}}(\vec{E}), Tr_{EXEMPT-\mathcal{S}}(\vec{E})$, \linebreak $[M(s)^k,M(s^E)^k, AuthCode, Resp], \mathbf{exempt-orig})$ is true.
\end{definition}

\subsubsection{Analysis} 
\label{ssec:E-analysis}

As we do for the basic query protocol (Appendix~\ref{ssec:Q-analysis}), we now state and prove theorems that assert our exemption protocol confidentiality and agreement goals.

\begin{theorem}
[Confidentiality of $M(s)$]
\label{thrm:exempt-conf}
If $\mathcal{P}$ is any exemption protocol strand space,
then $\mathcal{P}$ satisfies SG2 from each role's perspective (Definition~\ref{def:SG2-exempt-synthclient}, Definition~\ref{def:SG2-exempt-keyserver},
Definition~\ref{def:SG2-exempt-database}).
\end{theorem}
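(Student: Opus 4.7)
The plan is to mirror the proof strategy used for the basic query protocol in Theorem~\ref{thrm:query-conf}, adapting it to the strand space of Definition~\ref{def:query-strandspaceX}. That is, for each of the three cryptographic perspectives (synthesizer, keyserver, keyed-hash database), I would run CPSA to exhaustion under the corresponding origination assumptions, augmented with a listener strand $Tr_\mathcal{L}(M(s))$. If CPSA terminates with no shape containing the listener, then by Definition~\ref{def:goal_conf} the confidentiality subgoal holds.

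The concrete steps I would carry out are as follows. First, I would extend the CPSA models for the basic query protocol with the exemption-specific terms from Definition~\ref{def:query-terms2}: the exemption sequence $M(s_E)$, the exemption-list token $T^E_{\mathcal{C},\mathcal{B}}$, and the penetrator non-originating $AuthCode$. Second, for each perspective I would instantiate the skeleton with the role-specific trace from Definitions~\ref{def:exempt-s-trace}--\ref{def:exempt-h-trace} along with the non-origination, unique-origination, and penetrator non-origination assumptions listed in Definitions~\ref{def:SG2-exempt-synthclient}--\ref{def:SG2-exempt-database}. Third, I would add a listener strand targeting $M(s)$ and run CPSA; since the regular strands only release $M(s)$ inside the TLS-protected write-channels (whose keys derive from uniquely originating DH exponents) or in blinded form $M(s)^{\beta}$ and $M(s)^{\beta k}$ (with $\beta$ uniquely originating and never released), I expect CPSA to close off every branch without realizing the listener.

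The main obstacle I anticipate is confirming that the exemption-specific additions do not open a new leakage path for $M(s)$ that was absent in the basic query protocol. In particular, messages now bundle $M(s)^{\beta}$ together with $M(s_E)^{\beta}$ in one ciphertext to {\myK}, and bundle $M(s)^k$ with $M(s_E)^k$, $T^E_{\mathcal{C},\mathcal{B}}$, and $AuthCode$ in the request to {\myH}. In the symbolic Dolev--Yao model CPSA uses, the two DH exponentiations are independent atomic terms, so co-occurrence in a concatenated ciphertext does not create an algebraic relationship the adversary can exploit; the blinding exponent $\beta$ still acts as a one-time secret. I therefore expect CPSA to enumerate essentially the same families of shapes as in the query protocol analysis (modulo the added exemption fields, which do not appear in any position where $M(s)$ would be derivable), with no shape admitting the listener. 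Finally, I would note that the compromise of $AuthCode$ or the exemption token by a malicious {\myH}---which is relevant to SG3 but not to SG2---does not affect this argument, since those values are independent of $M(s)$ in the term algebra.
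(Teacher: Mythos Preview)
Your proposal is correct and takes essentially the same approach as the paper: run CPSA to exhaustion under each role's origination assumptions with a listener for $M(s)$, and conclude from the absence of any realizing shape that the confidentiality subgoal holds in every perspective. The paper's own proof is the terse ``Proof by Enumeration'' statement that no execution model contradicts confidentiality of $M(s)$, so your write-up simply spells out the mechanics (and the informal reasoning about why the exemption additions cannot leak $M(s)$) that the paper leaves implicit.
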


\begin{proof}[Proof by Enumeration]
Within each perspective, there exist no unique execution models that contradict the confidentiality of the client sequence $M(s)$.
\end{proof}

\begin{theorem}
[Agreement on $M(s)^k$, $M(s^E)$, $AuthCode$, $Resp$]
\label{thrm:exemp-agree}
If $\mathcal{P}$ is any exemption protocol strand space,
then $\mathcal{P}$ does \textbf{not} satisfy SG3 from the keyed-hash database perspectives
(Definition~\ref{def:SG3-exempt-database}).
\end{theorem}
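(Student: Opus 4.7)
The plan is a direct counterexample-style argument, modeled after the proof of Theorem~\ref{thrm:query-agree}. Since the exemption protocol reuses SCEP verbatim for authenticating {\myS} to {\myH}, and Theorem~\ref{thrm:scep-w-s} has already established that SCEP fails to authenticate {\myS} to the responder, the natural approach is to lift the existing MitM shape from the basic-query analysis into the exemption setting, then verify with CPSA that the added exemption machinery ($M(s_E)^k$, $T^E_{\mathcal{C,B}}$, $AuthCode$) does not restore agreement.

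First, I would encode the exemption protocol strand space (Definition~\ref{def:query-strandspaceX}) as a CPSA skeleton, together with the origination assumptions of Definition~\ref{def:SG3-exempt-database}, and seed the search with a fully executed regular {\myH} strand whose parameters include concrete values for $M(s)^k$, $M(s_E)^k$, $AuthCode$, and $Resp$. Second, I would run CPSA to enumerate every minimal shape consistent with that seed and inspect whether each shape contains a matching regular {\myS} strand agreeing on the full tuple $[M(s)^k, M(s_E)^k, AuthCode, Resp]$. Based on the SCEP counterexample, I expect CPSA to produce a shape in which the penetrator establishes two parallel SCEP sessions---one with an honest {\myS}, one with {\myH}---reuses {\myS}'s authentication token on the second session (as in the rate-limiting attack of Section~\ref{sec:ratelimit}), and relays the encrypted payload (containing $M(s)^k$, $M(s_E)^k$, $T^E_{\mathcal{C,B}}$, and $AuthCode$) across the two channels. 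Because $AuthCode$ is only penetrator non-originating, the adversary cannot mint it but \emph{can} forward it unchanged, and because $T^E_{\mathcal{C,B}}$ is signed only by {\myB} with no binding to the SCEP context, it too survives relaying. A single such shape refutes the universally quantified agreement predicate in Definition~\ref{def:goal_agree}.

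The main obstacle I anticipate is twofold. First, ensuring CPSA terminates on the enlarged model: adding the exemption payload, the second DOPRF round with {\myK}, and the extra nonces noticeably expands the search space, so I would likely need to stage the analysis (e.g., prove the result first with a simplified single-sequence exemption trace, then re-run with the full trace) and possibly add intermediate facts to prune branches. Second, I must carefully verify that the shape genuinely violates the agreement goal rather than merely reflecting an admissible interleaving: for each candidate shape I would confirm that no regular {\myS} strand in the bundle carries the same $(M(s)^k, M(s_E)^k, AuthCode, Resp)$ tuple that {\myH} used, which amounts to checking that the penetrator-injected payload at {\myH} differs from any payload emitted by an honest synthesizer in the same bundle. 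Once one such shape is exhibited and validated, Theorem~\ref{thrm:exemp-agree} follows immediately, mirroring the structure of the proof of Theorem~\ref{thrm:query-agree}.
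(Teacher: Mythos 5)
Your proposal matches the paper's proof in substance: the paper also proves this theorem by running CPSA on the exemption strand space from the {\myH} perspective under the assumptions of Definition~\ref{def:SG3-exempt-database} and exhibiting counterexample shapes in which {\myH} fails to agree with any legitimate {\myS} strand, the root cause being the same SCEP one-way-authentication flaw you lift from the basic-query analysis. Your additional remarks about search-space growth and validating the candidate shapes are sensible engineering notes but do not change the argument.
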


\begin{proof}[Proof by Counterexample]
There exist multiple shapes in the perspective of {\myH} that contradict agreement on $M(s)^k, M(s^E), AuthCode, Resp$ between a strand of role {\myH} and another legitimate strand.
\end{proof}


\clearpage
\subsection{Other Security Issues}
\label{sec:other}

This section briefly identifies several additional security design issues 
of the SecureDNA system we reviewed
that are noteworthy of consideration, even if they do not
present imminent high risk.
Some of these issues highlight tradeoffs inherent in engineering systems,
including tradeoffs between security strength and user requirements.

(1)~The published descriptions~\cite{SecureDNA2024} and source code of the SecureDNA system
do not describe how they would replace the shared DOPRF key $k$ should it ever be compromised.  All security systems should address re-keying procedures.
The SecureDNA team explained that, should they need to replace $k$,
they would do what they do whenever they regenerate {\myH}:
generate a new $k$ and regenerate {\myH}, which they say they can do quickly.


(2)~In the current system implementation, the DOPRF key is generated \textit{centrally}, not in a \textit{distributed} fashion as per their design, creating a single point of failure. The SecureDNA team stated that a future update will address this issue.

(3)~Some intended potentially important security features were not yet implemented
at the time of our review.
These features include \textit{proactive} secret sharing~\cite{herzberg1995proactive} 
(in which key shares are rotated to mitigate the threat that an adversary might eventually compromise all keyservers, a few at a time).
The SecureDNA team explained that, instead of rotating the key shares, 
they can re-key and regenerate {\myH}.
They also explained that a new version of the code now checks for high rate limits, and
that they monitor and enforce rate limits 
at servers when clients make requests. 


(4)~The system encrypts stored keys with user-entered passphrases, but writes
these passphrases as plain text to a provider-specified location.
This location might be an ordinary file,
possibly defeating any benefit from encrypting the keys.
While DNA synthesis providers have the option to institute stronger security practices (e.g., involving a password manager or TPM), these practices are not mandated
and SecureDNA cannot control them.

(5)~Some of the auditing and record-keeping mechanisms 
depend in part on the integrity and assured operation of email. 
For example, the exemption-handling protocol 
sends an alert to {\myB} via email whenever it processes an ELT [E24].
A powerful attacker might be able to block email notifications.
The SecureDNA team explained that providers and {\myB} typically use cloud-based email
services and that all of the auditing emails are encrypted.

(6)~Rather than using a well-vetted library to support certificate infrastructure,
SecureDNA creates its own custom infrastructure, unnecessarily increasing complexity
and risk of security errors.  Given that the X.509 standard~\cite{X509} includes
an optional user-defined field, we feel there is no compelling reason to create
a custom infrastructure (see also Section~\ref{sec:discussion}).


(7)~In the basic-request protocol, {\myS}'s response to {\myC} includes
information that is potentially dangerous, especially for novel hazards: 
if the request is denied, {\myS} includes
in its response which sequence(s) in the requested list of sequences are hazardous. 
It also states the pathogen from which the hazardous sequence came
and why that pathogen is dangerous.
The SecureDNA team explained that clients and providers demand this information,
and novel hazards are treated differently.


(8)~The implemented DOPRF is not post-quantum secure~\cite{bernstein2017post}.
The SecureDNA team doubted that any post-quantum secure DOPRF would meet their
time and space performance requirements.  They also said that a post-quantum adversary
would be able to break TLS.

\clearpage
\onecolumn
\clearpage 
\subsection{Summary Results from Formal-Methods Analyses of SecureDNA Protocols}
\label{append:results}

\begin{table}[h]
\caption{Security Properties of SCEP and SCEP+. 
For each model, a check ({\checkmark}) indicates that, for all possible execution models under corresponding goal's security assumptions, the logical security goal holds. 
A crossmark ({\myno}) indicates that CPSA finds a counterexample that disproves the corresponding security goal.
See Definition~\ref{def:scep-sw-secure} and Definition~\ref{def:scep-ws-secure} for the security goal definitions.}
\label{tab:results1}
\begin{center}
\begin{tabular}{l|l|l}
Model & {\myS}-{\myW} Secure & {\myW}-{\myS} Secure\\
\hline
SCEP  & {\checkmark} & {\myno}\\
SCEP+  & {\checkmark} & {\checkmark}\\
\end{tabular}
\end{center}
\vspace*{-16pt}
\end{table}

\bigskip \bigskip

\begin{table}[h]
\caption{Security Properties of Basic Query and Exemption Query. To achieve confidentiality of $M(s)$ and agreement on $M(s)^k$, $M(s^E)$, $AuthCode$, $Resp$, the protocol must satisfy the corresponding security goal definitions in Appendix~\ref{ssec:Q-goals} (Basic Query protocol) and Appendix~\ref{ssec:E-goals} (Exemption Query protocol).}
\label{tab:results2}
\begin{center}
\begin{tabular}{l|p{0.85in}|p{0.85in}}
Model & Confidentiality of $M(s)$ & Agreement on $M(s)^k$, $M(s^E)$, $AuthCode$, $Resp$\\
\hline
Basic Query (No TLS) & {\checkmark} & {\myno}\\
Basic Query  & {\checkmark} & {\myno}\\
Exemption Query & {\checkmark} & {\myno}\\
Basic Query (SCEP+) & {\checkmark} &{\checkmark}
\end{tabular}
\end{center}
\vspace*{-16pt}
\end{table}

\vfill

\subsection{Rate-Limiting Attack by Corrupt Keyserver}
\label{append:rate-attack}

We provide more details about the rate-limiting attack described in
Section~\ref{ssec:R-attacks}.
Whenever a legitimate synthesizer {\myS} makes a request to a corrupt keyserver {\myKp}, 
the following rate-limiting attack presents itself to {\myKp}.  
A similar attack is possible if {\myS} connects with
a corrupt {\myH}.

\bigskip
\begin{tabular}{lll}
[1] & {\myS} $\rightarrow$  {\myKp}: & Nonce({\myS}), Token({\myS})
\end{tabular}
\bigskip

Following this exchange, {\myKp} now knows Nonce({\myS}) and possesses Token({\myS}). 
Assume that {\myS} is far below their current rate limit. 
{\myK} now has the opportunity to masquerade as {\myS} to other keyservers 
or to {\myH}. 
Let {\myW} denote any legitimate responding infrastructure (e.g., {\myH}).



\bigskip
\begin{tabular}{lll}
[2] & {\myKp} $\rightarrow$  {\myW}: & Nonce({\myS}), Token({\myS})\cr
[3] & {\myKp} $\leftarrow$  {\myW}: & 
Cookie, Nonce({\myW}), Token({\myW}), $x =$ Sign[Hash(``server-mutauth'', Nonce({\myS}), Nonce({\myW}), \cr
& & Token({\myW})), Privk({\myW})]
\end{tabular}
\bigskip

{\myKp} also knows Nonce({\myW}) and Token({\myW}), and has a signed copy of the 
hash that {\myS} expects. 
{\myKp} can now exploit the weakness of SCEP to trick {\myS} into authenticating {\myKp} to {\myW}.

\bigskip
\begin{tabular}{lll}
[4] & {\myS} $\leftarrow$ {\myKp}: & Cookie, Nonce({\myW}), Token({\myW}), $x$\cr
[5] & {\myS} $\rightarrow$  {\myKp}: & Cookie, $y =$ 
Sign[Hash(``client-mutauth'', Nonce({\myS}), Nonce({\myW}), Token(S)), Privk({\myS})]
\end{tabular}
\bigskip

Step~5 is important: {\myKp} obtains the critical value $y$.
Knowledge of $y$ enables {\myKp} to masquerade as {\myS} to {\myW}.

\bigskip
\begin{tabular}{lll}
[6] & {\myKp} $\rightarrow$  {\myW}: & Cookie, $y$
\end{tabular}
\bigskip

{\myW} now believes {\myKp} is {\myS}. 
Thus, {\myKp} is able to use the rate limit budget of {\myS} for their own requests to other keyservers or {\myH}. 
Behold {\myKp} can perform this attack on many different instances of {\myS} simultaneously.
This attack is within SecureDNA's adversarial model, but Baum et al.\ \cite{baum2024CIC,baum2020cryptographic}
did not consider rate-limiting attacks.


\clearpage
\subsection{Ladder Diagrams of SecureDNA Protocols}
\label{append:ladder}

\begin{figure}[h]
\includegraphics[width=\textwidth]{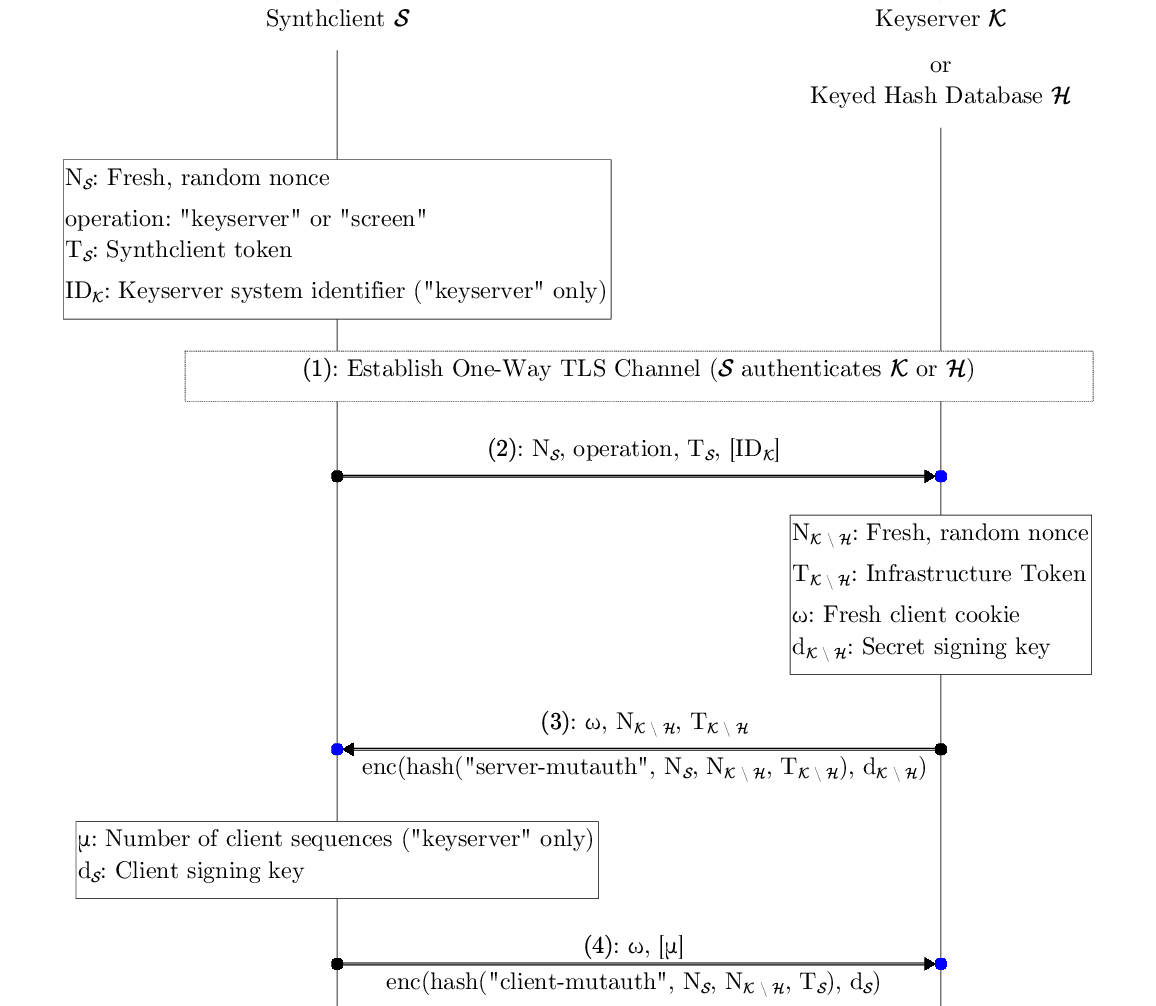}
\caption{Ladder diagram of the custom mutual authentication \textit{(SCEP)} protocol. Vertical lines correspond to communicating roles ({\myS, \myK, \myH}). Arrows between vertical lines indicate roles transmitting and receiving protocol messages. SCEP assumes an existing one-way authenticated TLS channel between the synthesizer {\myS}, and the keyserver {\myK} or the keyed hash database {\myH}. Some message components appear only when {\myS} communicates with {\myK}; we indicate these components by surrounding them with brackets ({[]}). The responder nonce $N$, token $T$, and secret signing key $d$ correspond to the entity with which {\myS} communicates. Upon completing SCEP, {\myS} obtains a request cookie {\mycookie}, which {\myS} transmits in a subsequent request to {\myK} or {\myH}.}
\label{fig:auth-ladder}
\end{figure}


\clearpage

\vfill

\begin{figure}[b]
\includegraphics[width=\textwidth]{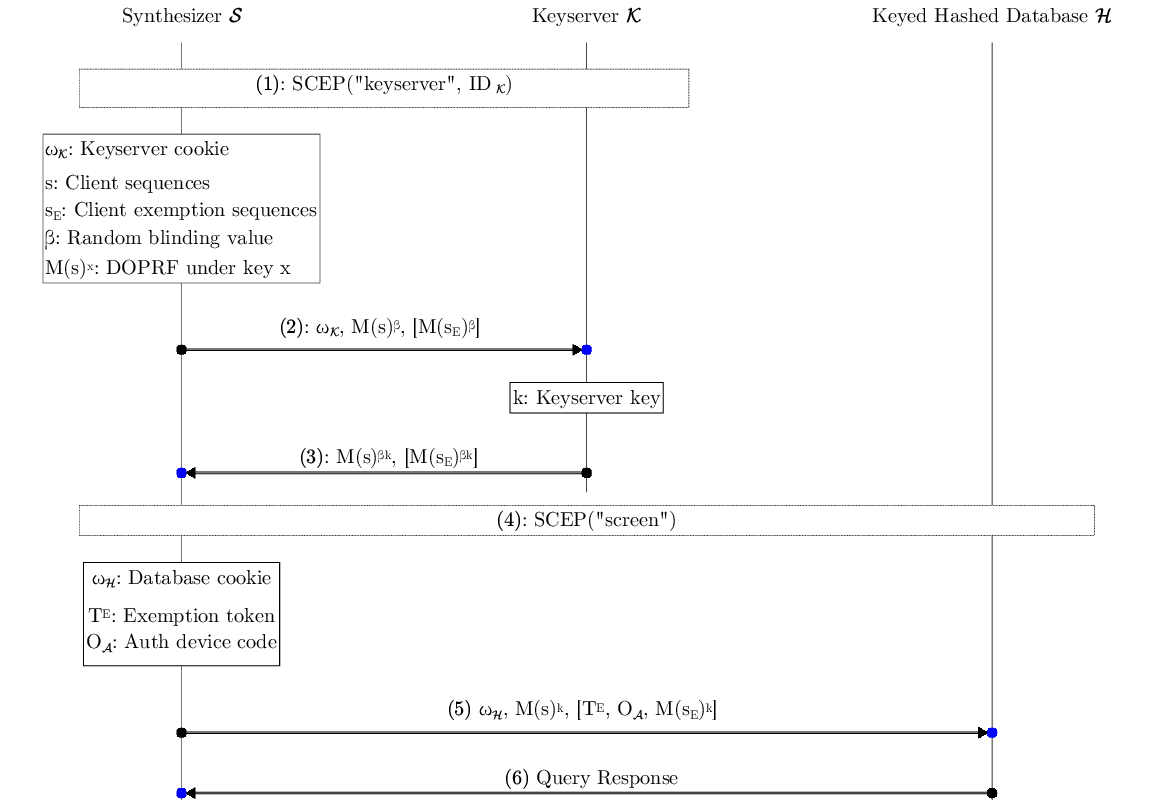}
\caption{
Ladder diagrams of the basic order-request and exemption-handling protocols. Because {\myS} communicates with {\myK} and {\myH} over separate, one-way authenticated TLS connections, {\myS} completes the SCEP mutual authentication protocol (see Figure~\ref{fig:auth-ladder}) in Steps~1 and 4 to obtain request cookies $\omega_\mathcal{K}$ and $\omega_\mathcal{H}$. When completing SCEP with {\myK}, {\myS} includes {\myK}'s identifier $I\!D_\mathcal{K}$. In Step~2, {\myS} transmits to {\myK} the cookie $\omega_\mathcal{K}$, blinded sequences $M(s)^\beta$ and (optionally) the blinded exempt sequences $M(s_E)^\beta$. In Step~5, {\myK} transmits $\omega_\mathcal{H}$, $M(s)^k$, and an optional tuple (exemption token $T^E$, second-factor authentication code $O_A$, and exemption sequence $M(s_E)^k$), to which {\myK} receives a query response that grants or denies the synthesis request in Step~6.
}
\label{fig:two-ladders}
\end{figure}

\clearpage
\subsection{CPSA Source Code Snippets}
\label{append:snippets}

\begin{figure}[h] 
\begin{Verbatim}
  (defrole synthesizer
    (vars
      (x expt) (y rndx) (cr sr text) 
      (r-s r-s2 r-k r-d t random256)
      (c-id k-id m-root i-root name)
      (seq blind k rndx)
      (b-s b-k bundle-data)
      (resp query-response)
    )
    (trace
      (Synthesizer2Server send recv x y cr sr
        c-id k-id m-root i-root
        r-s r-k t
        b-s b-k k)
      )
    (uniq-gen y)
    )
\end{Verbatim}
\caption{
CPSA specification for the \textit{synthesizer} role in the SCEP protocol.
This role declares variables and defines trace events 
(see Figure~\ref{fig:cpsa_src_clientM} for the definition of the 
\texttt{Synthesizer2Server} macro) 
that carry out the SCEP protocol.
Variable definitions comprise s-expressions that list labels followed by a sort (type).
To model the responding \textit{infrastructure server} role, we reverse the direction of each trace event by reversing the order of the \texttt{send} and \texttt{recv} inputs to the \texttt{Synthesizer2Server} macro.
}
\label{fig:cpsa_src_client}
\end{figure}

\bigskip \bigskip

\begin{figure}[h] 
\begin{Verbatim}
(defmacro (Synthesizer2Server 
     d1 d2 x y cr sr
     client-id server-id manufacturer-root infrastructure-root
     client-nonce server-nonce cookie
     client-bundle server-bundle key)
  (^
   (TLS_EDH d1 d2 x y cr sr server-id (pubk server-id) infrastructure-root)
   (d1 (enc (cat client-nonce "keyserver" server-id
      (Token client-id manufacturer-root client-bundle))
      (ClientWriteKey (MasterSecret (exp (gen) (mul x y)) cr sr))))
   (d2 (enc (ServerMutauthReq
      server-id client-id manufacturer-root infrastructure-root
      server-nonce client-nonce cookie
      client-bundle server-bundle)
      (ServerWriteKey (MasterSecret (exp (gen) (mul x y)) cr sr))))
   (d1 (enc (ClientMutauthResp
      server-id client-id manufacturer-root infrastructure-root
      server-nonce client-nonce cookie
      client-bundle server-bundle)
     (ClientWriteKey (MasterSecret (exp (gen) (mul x y)) cr sr))))
))
\end{Verbatim}
\caption{
CPSA macro that defines a trace for the SCEP protocol. This macro depends on other macro definitions (\texttt{TLS\_EDH, Token, ClientWriteKey, MasterSecret, ServerWriteKey, ServerMutauthReq, ClientMutauthResp}) available in our GitHub repository~\cite{anonGitHub}.
When executing, CPSA expands these macros to build the full trace for an SCEP synthesizer or infrastructure server strand.
}
\label{fig:cpsa_src_clientM}
\end{figure}

\vfill

\clearpage
\subsection{CPSA Shapes}
\label{append:shapes}

\begin {figure}[h]
\includegraphics[width=\textwidth]{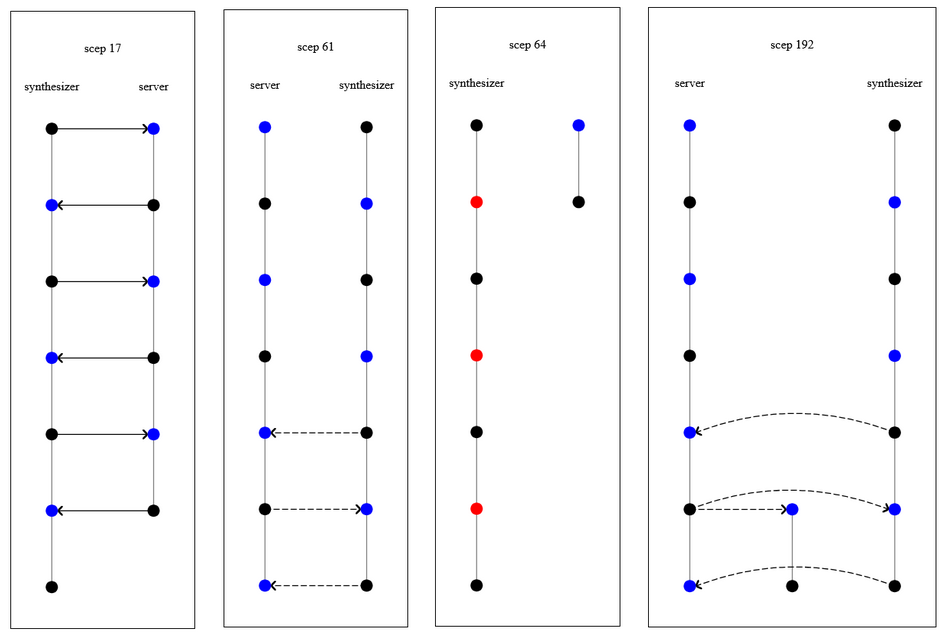}
\caption{Selected (non-exhaustive) CPSA shapes for SCEP executions. 
Each shape corresponds to a unique execution model that CPSA discovers during execution.
Shape 17 provides evidence that, from the perspective of a legitimate synthesizer, there exists an exchange of messages with a legitimate infrastructure server such that both parties agree on each value.
Shape 61 illustrates the weakness of SCEP: the dashed lines indicate that penetrator strands are manipulating messages in transit, proving that the synthesizer and the server do not agree on the session parameters.
Shape 64 is a ``dead'' shape in which a listener strand listens for the session cookie $\omega$. 
Because CPSA cannot satisfy (or ``realize'') the nodes for the synthesizer strand, we prove that, within the execution model, the adversary does not learn $\omega$.
Shape 192 illustrates that, due to the weakness of the SCEP (see Shape 61), 
the value $\omega$ leaks to a listener strand.
To prove our theorems, we apply logical formulae on each such shape resulting from executing CPSA to test a security goal.}
\label{fig:code}
\end{figure}

\vfill

\clearpage
\subsection{Endnotes About SecureDNA Source Code }
\label{append:endnotes}
\label{append:notes}

\medskip
These endnotes provide support for our descriptions of the SecureDNA system and its protocols.
The files are located in the crates folder of the SecureDNA source code~\cite{SecureDNAweb}.


\noindent \begin{tabular}{llll} \\
\textbf{Note} & \textbf{File} & \textbf{Line(s)} & \textbf{Comment} \\ \hline
1 & certificates/src/certificates/inner/common.rs & 25 & Cert subject fields\\
2 & certificates/src/certificates/inner/common.rs & 178 & Cert issuer fields\\
3 & certificates/src/certificates/inner/common.rs & 92 & Exemption cert subject fields\\
4 & keyserver/src/server.rs & 172 & Keyserver {\myK} main loop\\
5 & certificates/src/tokens/infrastructure/keyserver.rs & 51 & Keyserver token subject fields\\
6 & certificates/src/tokens/infrastructure/keyserver.rs & 144 & Keyserver token issuer fields\\
7 & hdbserver/src/server.rs & 184 & Keyed Hash Database {\myH} main loop\\
8 & certificates/src/tokens/infrastructure/database.rs & 46 & Database token subject fields\\
9 & certificates/src/tokens/infrastructure/database.rs & 137 & Database token issuer fields\\
10 & synthclient/src/shims/server.rs & 155 & Synthclient {\myS} main loop\\
11 & certificates/src/tokens/manufacturer/synthesizer.rs & 83 & Synthesizer token subject fields\\
12 & certificates/src/tokens/manufacturer/synthesizer.rs & 205 & Synthesizer token issuer fields\\
13 & certificates/src/tokens/exemption/et.rs & 58 & Exemption token subject fields\\
14 & certificates/src/tokens/exemption/et.rs & 218 & Exemption token issuer fields\\
15 & scep/src/steps.rs & 142-162 & {\myK} or {\myH} side SCEP part 1\\
16 & scep/src/steps.rs & 231-270 & {\myS} side SCEP\\
17 & scep/src/steps.rs & 368-383 & {\myK} or {\myH} side SCEP part 2\\
18 & certificates/src/chain\_item.rs & 130 & Cert chain validation part 1\\
19 & certificates/src/chain\_item.rs & 74 & Cert chain validation part 2\\
20 & scep/src/steps.rs & 410-436 & Rate limit check\\
21 & certificate\_client/src/shims/create\_token.rs& 61-77 & Synthesizer token create options\\
22 & scep/src/mutual\_authentication.rs & 1-82 & Various mutual authentifation functionality\\
23 & minhttp/src/mpserver/common.rs & 50 &Default rustls configuration has 0-rtt and\\ &&& session resumption disabled\\
24 & certificates/src/token/exemption/et.rs & 299 & Emails to notify function
\end{tabular}


\clearpage
\subsection{Acronyms and Abbreviations}
\label{append:abbrevs}

\noindent \begin{tabular}{ll} \\[-6pt] 
CA & certificate authority\\
CPSA & Cryptographic Protocol Shapes Analyzer\\
CRISPR & Clustered Regularly Interspaced Short Palindromic Repeats\\
DH & Diffie-Hellman\\
DOPRF & distributed oblivious pseudorandom function\\
DoS & denial of service\\
DY & Dolev-Yao\\
ELT & exemption list token\\
INSuRE & Information Security Research Education\\
MAC & message authentication code\\
MitM & man-in-the-middle\\
mTLS & TLS with mutual authentication\\
NS & Needham-Schroeder\\
PCAP & packet capture\\
PKI & public-key infrastructure\\
RSA & Rivest, Shamir, Adleman\\
SCEP & SecureDNA's Server Connection Establishment Protocol\\
SCEP+ & our improved version of SCEP\\
SDF & SecureDNA Foundation\\
SDS & SecureDNA system\\
SOC & Security Operations Center\\
TLS  & Transport Layer Security\\
TPM & Trusted Platform Module\\
UC & universal composability\\ 
UMBC & University of Maryland, Baltimore County\\ 
0-RTT & zero roundtrip time\\
\end{tabular}

\clearpage 
\setcounter{tocdepth}{3}
\tableofcontents 

\end{document}